\documentclass[]{interact}

\usepackage{epstopdf}
\usepackage[caption=false]{subfig}
\usepackage[longnamesfirst,sort]{natbib}
\bibpunct[, ]{(}{)}{;}{a}{,}{,}

\usepackage{amsmath,amssymb,amsfonts,mathtools}
\usepackage{graphicx}
\usepackage{float}
\usepackage{enumitem}
\usepackage{algorithm}
\usepackage{algpseudocode}
\usepackage{tabularx}
\usepackage{ragged2e}
\usepackage{microtype}
\usepackage{url}
\usepackage[hidelinks]{hyperref}
\graphicspath{{./}}

\theoremstyle{plain}
\newtheorem{theorem}{Theorem}[section]
\newtheorem{lemma}[theorem]{Lemma}
\newtheorem{corollary}[theorem]{Corollary}

\theoremstyle{definition}

\theoremstyle{remark}
\newtheorem{remark}[theorem]{Remark}

\begin{document}

\articletype{ARTICLE}

\title{Climate-Conditioned Cascade Modeling for Multi-Peril Reinsurance:\\
Analysis and Controlled Numerical Applications}

\author{
\name{N.~Karimi\textsuperscript{1}\thanks{CONTACT N.~Karimi. Email: kariminader22@gmail.com}, E.~Salavati\textsuperscript{a} and F.~Shokrollahi\textsuperscript{b}}
\affil{\textsuperscript{1,a}Department of Applied Mathematics, Faculty of Mathematics and Computer Science, Amirkabir University of Technology, No. 424, Hafez Ave., 15914, Tehran, Iran;\\ \textsuperscript{b}Department of Mathematics and Statistics, University of Vaasa, Vaasa, Finland}
}

\maketitle

\begin{abstract}
Climate perils are linked through event ordering and state-dependent propagation, features not fully captured by joint loss distributions alone. This paper develops a Cascading Climate Risk Network (CCRN) for multi-peril reinsurance that separates calendar-scale climate conditioning from within-event propagation on a directed acyclic graph (DAG). The model combines complementary-log-log triggering hazards with bounded severity activation, mapping physical states to insured losses via a capacity-bounded demand-surge transformation.

For fixed shocks, the event-scale cascade reaches a unique finite-step closure. Monotone comparative statics provide a pathwise upper-corner loss bound over rectangular stress sets, yielding a transparent contract-level stress-testing guarantee under common aleatory inputs.

Comprehensive numerical experiments, including copula and Bayesian-network benchmarks, sensitivity analyses, and uncertainty propagation, demonstrate that while central layer prices remain robust across matched-marginal dependence structures, far-tail and high-layer behaviors differ materially. Directional propagation, annual event frequency, and dependence strength emerge as the principal risk drivers. The study provides a controlled synthetic verification of the proposed architecture.
\end{abstract}

\begin{keywords}
cascading climate risk; multi-peril reinsurance; directed acyclic graphs; annual aggregate loss; climate-conditioned hazard
\end{keywords}

\section{Introduction}
\label{intro}

Dependence modeling is central to multi-peril insurance and reinsurance because aggregate losses depend on both marginal hazard behaviour and cross-peril association. Copulas provide a flexible separation of marginal distributions and dependence and can represent nonlinear, asymmetric, and time-varying association when appropriately specified \citep{frees1998,joe1997,embrechts2002,aas2009,czado2010,brechmann2013,hafner2012,creal2013}. They therefore remain important actuarial benchmarks. Related financial-network
research shows how directed connections can amplify shocks at the system level
\citep{allen2000,acemoglu2015,battiston2012,haldane2011,cummins2014,eling2019};
the present paper translates that broad amplification idea into a physical
multi-peril and reinsurance setting rather than importing a balance-sheet
contagion model directly.

A different modeling requirement arises when the scientific question concerns an explicitly ordered sequence of states. Antecedent fuel aridity may alter wildfire susceptibility; a realized wildfire may then modify vegetation cover and slope conditions; and subsequent intense rainfall may activate post-fire debris-flow risk. Reviews of interacting and cascading hazards emphasize that triggering, amplification, and temporal ordering should be distinguished from coincident or statistically dependent hazards \citep{gill2014,tilloy2019,zscheischler2020,raymond2020}. In such settings, a graph can state which transitions are mechanistically admissible, while a statistical model estimates their magnitudes.

The proposed Cascading Climate Risk Network (CCRN) is not presented as a new theory of directed graphical models. Directed acyclic graphs, event trees, multi-state hazards, and self- or mutually exciting point processes already provide mature representations of conditional structure and temporal propagation \citep{lauritzen1996,hawkes1971}. Instead, the CCRN is developed as an actuarial architecture that links a two-time-scale climate/event representation, climate-conditioned triggering hazards, separate occurrence and conditional-severity states, a capacity-bounded financial loss map, and an explicit annual-aggregate reinsurance contract \citep{karimi2026cat}. This positioning keeps the methodological claim aligned with the paper's analytical and numerical evidence. Table~\ref{tab:adjacent-models} summarizes the intended division of labour between the CCRN and adjacent model classes.

\begin{table}[H]
\centering
\small
\caption{Relationship of the CCRN to adjacent dependence and propagation models. The entries describe typical formulations rather than universal limitations of each model class.}
\label{tab:adjacent-models}
\begin{tabularx}{\textwidth}{@{}>{\RaggedRight\arraybackslash}p{0.18\textwidth}>{\RaggedRight\arraybackslash}X>{\RaggedRight\arraybackslash}X@{}}
\toprule
\textbf{Model class} & \textbf{Main capability} & \textbf{Relation to the CCRN} \\
\midrule
Copula model & Flexible nonlinear and tail dependence for a joint outcome distribution. & Benchmark, or a residual-dependence layer when event ordering is not the main object. \\
Bayesian/DAG model & Conditional factorization and transparent admissible pathways. & Closest structural relative; the CCRN adds onset hazards, conditional severity, bounded loss, and contract aggregation. \\
Hawkes/point process & Continuous-time recurrent excitation, including feedback. & More natural when exact event times and recurrent feedback are central; the present CCRN assumes an event-scale DAG. \\
CCRN & Ordered occurrence-severity states translated into insured loss. & Scenario-analysis framework studied here. \\
\bottomrule
\end{tabularx}
\end{table}

Table~\ref{tab:adjacent-models} positions the proposed framework relative to three adjacent model families. Copulas remain flexible models for the joint distribution of final outcomes, Bayesian or DAG formulations are the closest structural relatives because they encode conditional pathways, and Hawkes-type models are more natural when recurrent excitation and feedback in continuous time are essential. Within this positioning, the paper makes four connected contributions. First, it couples an ordered occurrence--severity cascade to the capacity-bounded loss transformation in Equation~\eqref{eq:bounded-loss} and the annual-aggregate contract map in Equation~\eqref{eq:annual-contract-loss}. Second, it establishes pathwise, rather than only in-expectation, comparative statics for ground-up and ceded contract losses under common aleatory inputs through Corollary~\ref{cor:comparative-statics} and Theorem~\ref{thm:certified-bound}. Third, it separates calendar-scale climate conditioning from within-event physical propagation so that the two sources of variation are not conflated. Fourth, it provides a matched-marginal benchmark protocol for comparing induced joint distributions while explicitly limiting the interpretation to an in-sample dependence-shape comparison rather than an out-of-sample superiority test. The novelty therefore lies in this actuarial composition and its contract-level monotonicity guarantee, not in a claim to have invented DAG propagation itself.

\begin{figure}[H]
\centering
\includegraphics[width=0.96\textwidth]{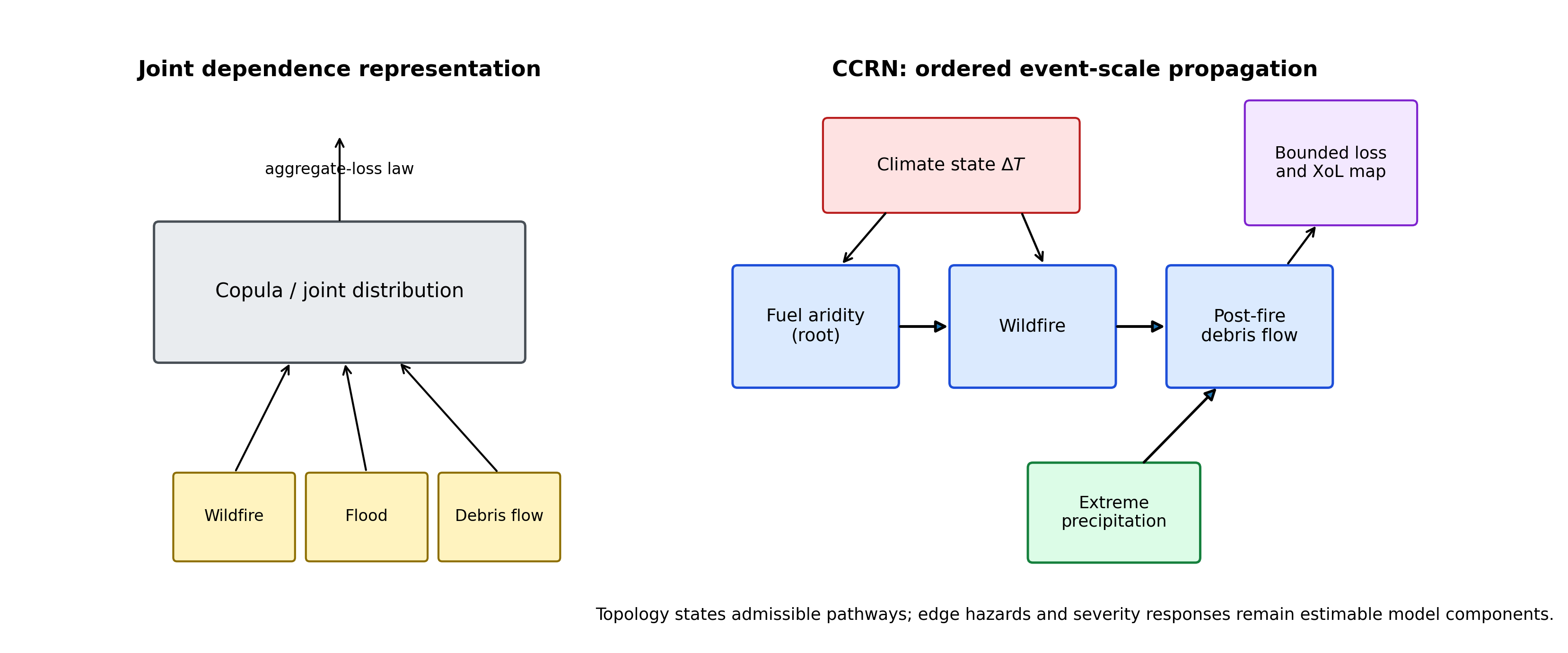}
\caption{Computational architecture of the CCRN. Exogenous event-window shocks and climate-conditioned edge hazards generate the cascade closure; the bounded loss map and annual frequency model then determine the contract loss.}
\label{fig:ccrn-architecture}
\end{figure}

Figure~\ref{fig:ccrn-architecture} summarizes the full computational flow. The left panel represents the benchmark view in which peril outcomes enter a joint dependence model and are mapped directly to an aggregate-loss law. The right panel makes the event ordering explicit: the climate state modifies the fuel-aridity-to-wildfire and wildfire-rainfall-to-debris-flow channels, the directed graph determines which transitions are admissible, and the final physical severities are passed through the bounded loss and excess-of-loss mappings. The diagram is included to show that climate conditioning, event-scale propagation, and financial aggregation are separate layers of the model rather than interchangeable sources of dependence.

The mathematical analysis establishes finite-step closure and uniqueness on a finite event-scale DAG. Monotonicity under common random numbers then yields an exact upper-corner bound for a documented rectangular stress set. The bound is deliberately narrow: it does not convert a correlated confidence region into a rectangle, identify causal effects, or replace a full parameter- and climate-uncertainty analysis.

The numerical evaluation is organized as a controlled synthetic verification study. It combines an annual frequency--severity construction with matched-marginal copula benchmarks, Student-$t$ tail sensitivity, an exogenous equicorrelation grid, attachment--limit contract tests, climate-conditioned frequency, residual latent dependence, structural ablation, structured one-at-a-time parameter sensitivity, synthetic parameter recovery, uncertainty propagation, a static Bayesian-network benchmark, and independent Monte Carlo replications.
\subsection{Economic Motivation and the Scope of Statistical Dependence Models}\label{Eco}

Portfolio diversification remains fundamental to reinsurance, but its effectiveness depends on the joint loss law, concentration of exposure, contract wording, and the stability of relationships across climate states. A static dependence model may be adequate for some pricing tasks, whereas an explicitly ordered model is useful when the event definition itself contains triggering or state inheritance. The distinction is therefore one of modeling purpose rather than a categorical replacement of copulas.

Consider the sequence fuel aridity $\rightarrow$ wildfire $\rightarrow$ post-fire debris-flow susceptibility, with intense precipitation acting as a second parent of debris flow. The first transition concerns wildfire susceptibility, while the second requires both a realized burn state and a storm-related driver. A joint distribution can reproduce the observed association among final losses, but it does not by itself state which pathways are physically admissible or how the conditional transition changes under a climate scenario. The CCRN makes those assumptions explicit and therefore exposes them to scientific review.

The financial transformation is also separated from the physical cascade. Regional scarcity of labour, materials, and temporary accommodation can amplify repair costs after clustered disasters \citep{hallegatte2008,botzen2021}. Because unconstrained multipliers can create unrealistic losses, the model applies demand surge on the odds scale and caps each component loss at a declared maximum modeled loss. This is a scenario-analysis device; empirical use requires exposure, vulnerability, repair-cost, and contract data estimated within a frozen training sample.

Accordingly, the paper does not claim that a cascade specification always produces a heavier tail than a copula or that climate change mechanically invalidates diversification. The controlled experiments ask narrower questions: whether the implementation respects the mathematical bounds, how directional propagation changes the loss distribution under fixed assumptions, and how sensitive an illustrative annual aggregate layer is to documented modeling components.

The paper is organized into six top-level sections. Section~\ref{math}
defines the CCRN, its financial mapping, and its analytical properties.
Section~\ref{data} gives the core estimation and data protocol, with the detailed
implementation workflow in Appendix~\ref{app:data-workflow}.
Section~\ref{num} presents the numerical study, dependence and contract
sensitivity, uncertainty propagation, and independent replication analysis.
Section~\ref{analy} discusses computational complexity and numerical verification, and
Section~\ref{con} concludes.
\section{CCRN Model and Analytical Properties}
\label{math}

We represent the insured physical system by a fixed directed graph
$\mathcal{G}=(\mathcal{V},\mathcal{E}_{\mathrm{phys}})$, where
$\mathcal{V}=\{v_1,\ldots,v_n\}$ is a set of peril-exposure components
and $\mathcal{E}_{\mathrm{phys}}\subseteq\mathcal{V}\times\mathcal{V}$
contains mechanistically admissible propagation directions. The graph
topology is fixed over the modeled event horizon, whereas the numerical
strength of its propagation channels may vary across calendar periods,
spatial units, and climate scenarios.

We distinguish three indices. The index $y$ denotes the calendar period
or climate-scenario evaluation time, $k\in\{1,\ldots,K\}$ indexes spatial
units, and $r\in\mathbb{N}_0$ denotes the within-event propagation step.
For a fixed $y$ and $k$, climate and exposure conditions are held constant
during the comparatively short event-scale cascade. This distinction
prevents long-term climate evolution from being conflated with short-term
physical propagation.

In the numerical study, $y$ indexes the simulated contract year and the
calendar-scale climate input assigned to that year; the climate state is fixed
within all event windows of that year. This convention avoids using separate
symbols for two indices that coincide in the controlled design.

\begin{table}[H]
\centering
\small
\caption{Core notation used in the cascade and contract mapping.}
\label{tab:notation}
\begin{tabularx}{\textwidth}{@{}p{0.18\textwidth}>{\RaggedRight\arraybackslash}X p{0.20\textwidth}@{}}
\toprule
\textbf{Symbol} & \textbf{Meaning} & \textbf{First definition} \\
\midrule
$w_{ij,k}(y)$, $H_{ij,k}(y)$ & Edge triggering probability and its cumulative hazard & Eq.~\eqref{eq:dynamic-edge-weight} \\
$Z_j^{(r)}$, $S_j^{(r)}$ & Occurrence indicator and normalized conditional severity & Section~\ref{subsec:cascade-construction} \\
$\pi_{j,k}^{(r+1)}$ & Total onset probability of a non-root node & Eq.~\eqref{eq:onset-probability} \\
$X_{j,k}^{(r)}$, $\Psi_j$ & Physical stress index and bounded severity response & Eqs.~\eqref{eq:physical-stress}--\eqref{eq:severity-response} \\
$L_j$, $L_{\mathrm{ann},y}$ & Component ground-up loss and annual aggregate ground-up loss & Eqs.~\eqref{eq:bounded-loss}, \eqref{eq:annual-contract-loss} \\
$D$, $M$, $L_{R,y}$ & Annual aggregate deductible, limit, and ceded loss & Eq.~\eqref{eq:annual-contract-loss} \\
\bottomrule
\end{tabularx}
\end{table}

Table~\ref{tab:notation} consolidates the principal symbols used in the
propagation and financial layers of the model. It distinguishes edge
probabilities and cumulative hazards from realized occurrence and severity
states, and it links the physical stress variables, ground-up losses, and
contract quantities to the equations in which they are first defined. The
table therefore serves as a reference for the notation used in the subsequent
construction and proofs.

\subsection{Climate-Conditioned Cascade Construction}
\label{subsec:cascade-construction}

The physical cascade is constructed in four linked stages: climate-conditioned edge hazards, occurrence states, conditional severity, and a realized propagation recursion. We begin with the edge hazards. Let $\mathcal{L}_{ij}$ denote a prespecified physically plausible lag
window for the directed edge $(i,j)$. Let $R_{j,k,y}=1$ indicate that
component $j$ in spatial unit $k$ is at risk of a new onset, and let
$\mathcal{H}_{k,y-1}$ contain the information available before that
onset.

The edge quantity $w_{ij,k}(y)\in(0,1)$ is defined as the incremental
triggering probability associated with a fully active parent $v_i$ over
the lag window $\mathcal{L}_{ij}$, conditional on the declared risk set
and covariates. It is an edge-specific probability contribution and
must not be interpreted as the total probability of onset of node $v_j$.

We specify the edge probability using the complementary log-log relation
$\log[-\log(1-w_{ij,k}(y))]
=\alpha_{ij}+u_{ij,k}+\beta_{ij}z_k(y)
+\boldsymbol{\gamma}_{ij}^{\top}\mathbf{c}_k(y)$, where $z_k(y)$ is a
climate anomaly relative to a declared reference period,
$\mathbf{c}_k(y)$ contains prespecified time-varying covariates or
confounders, and $u_{ij,k}$ is a spatial random effect.

At the reference climate state, the baseline edge probability is
$p_{ij,k}^{0}=1-\exp\{-\exp(\alpha_{ij}+u_{ij,k})\}$. The corresponding
climate-conditioned probability can therefore be written as
\begin{equation}
w_{ij,k}(y)
=
1-
\left(1-p_{ij,k}^{0}\right)^{
\exp\left[
\beta_{ij}z_k(y)
+
\boldsymbol{\gamma}_{ij}^{\top}\mathbf{c}_k(y)
\right]
}.
\label{eq:dynamic-edge-weight}
\end{equation}

Equation~\eqref{eq:dynamic-edge-weight} guarantees
$0<w_{ij,k}(y)<1$ for every finite covariate value. A positive
$\beta_{ij}$ implies that the edge probability increases with the
specified climate anomaly, but its sign is constrained only when a
mechanistic argument is declared before estimation. Otherwise,
$\beta_{ij}$ is estimated without a sign restriction and reported with
an uncertainty interval.

The reference period, spatial scale, temporal aggregation, and physical
units of $z_k(y)$ must be stated whenever $\beta_{ij}$ is reported.
Moreover, the graph restricts the set of scientifically admissible
directions but does not by itself identify causal effects. In the absence
of defensible identification assumptions, $w_{ij,k}(y)$ is interpreted
as a physics-constrained predictive triggering probability rather than
an intervention effect.

Given these edge-specific hazards, the state of each node is represented by an occurrence indicator and a conditional severity. For each node $v_j$, let $Z_j^{(r)}(y)\in\{0,1\}$ denote its onset status
at propagation step $r$. The value $Z_j^{(r)}(y)=1$ indicates that the
peril has occurred within the modeled event window. Conditional on
occurrence, $S_j^{(r)}(y)\in[0,1]$ denotes its normalized physical
severity. We impose $0\leq S_j^{(r)}(y)\leq Z_j^{(r)}(y)$, so that
severity is necessarily zero when no onset has occurred.

The vectors $\mathbf{Z}^{0}(y)$ and $\mathbf{S}^{0}(y)$ contain exogenous event-window shocks observed or simulated before propagation, with $0\leq\mathbf{S}^{0}(y)\leq\mathbf{Z}^{0}(y)$ component-wise. They may include root-node shocks and pre-existing background onsets at non-root nodes. To avoid double counting, a background mechanism represented in $(\mathbf{Z}^{0},\mathbf{S}^{0})$ is not simultaneously included in the background cumulative hazard $H_{0,j,k}(y)$. In the baseline numerical experiment, background occurrences are represented by $(\mathbf{Z}^{0},\mathbf{S}^{0})$ and $H_{0,j,k}(y)$ is set to zero.

Let $\mathcal{N}^{\mathrm{in}}_j
=\{i:(v_i,v_j)\in\mathcal{E}_{\mathrm{phys}}\}$ denote the direct
predecessors of node $v_j$. Define the edge cumulative hazard as
$H_{ij,k}(y)=-\log[1-w_{ij,k}(y)]$. For each edge, let
$q_{ij}:[0,1]\rightarrow[0,1]$ be a non-decreasing severity-response
function satisfying $q_{ij}(0)=0$ and $q_{ij}(1)=1$. A parsimonious
specification is $q_{ij}(s)=s^{\nu_{ij}}$ with $\nu_{ij}>0$.

Some target perils require two or more parent conditions to act jointly.
Let $\mathcal{I}_j$ contain the prespecified parent interactions relevant
to node $j$. For example, post-fire debris-flow onset may depend on the
joint presence of burn severity and intense precipitation. Let
$H^{\mathrm{int}}_{i\ell,j,k}(y)\geq0$ denote the cumulative hazard
associated with the interaction $(i,\ell)$ and let
$q^{\mathrm{int}}_{i\ell,j}$ be its normalized severity-response
function.

Conditional on the current parent states, the total onset probability
of a non-root node is
\begin{equation}
\begin{aligned}
\pi_{j,k}^{(r+1)}(y)
=
1-\exp\Bigg\{
&-H_{0,j,k}(y)
-\sum_{i\in\mathcal{N}^{\mathrm{in}}_j}
Z_i^{(r)}(y)
q_{ij}\!\left(S_i^{(r)}(y)\right)
H_{ij,k}(y)
\\
&-\sum_{(i,\ell)\in\mathcal{I}_j}
Z_i^{(r)}(y)Z_\ell^{(r)}(y)
q^{\mathrm{int}}_{i\ell,j}
\!\left(S_i^{(r)}(y),S_\ell^{(r)}(y)\right)
H^{\mathrm{int}}_{i\ell,j,k}(y)
\Bigg\},
\end{aligned}
\label{eq:onset-probability}
\end{equation}
where $H_{0,j,k}(y)\geq0$ is the background cumulative hazard not
attributed to the modeled parent nodes.

Equation~\eqref{eq:onset-probability} combines background, individual
parent, and prespecified interaction contributions on the cumulative
hazard scale. It therefore remains within the unit interval without
requiring the parent channels to be treated as mutually exclusive.
The edge probability $w_{ij,k}(y)$ affects occurrence, whereas the
conditional physical severity is modeled separately below.

Occurrence and physical severity are deliberately kept distinct. Once a node is exposed to its active parents and local physical drivers, its conditional severity is determined through a normalized stress response. For each non-root node, define a normalized physical stress index
\begin{equation}
X_{j,k}^{(r)}(y)
=
G_{j,k}
\left(
\{Z_i^{(r)}(y),S_i^{(r)}(y):
i\in\mathcal{N}^{\mathrm{in}}_j\},
\mathbf{x}_{j,k}(y)
\right)
\in[0,1],
\label{eq:physical-stress}
\end{equation}
where $\mathbf{x}_{j,k}(y)$ contains the physical drivers required for
the target peril, such as rainfall intensity, burn severity, slope,
antecedent moisture, wind, ignition conditions, or soil properties.
The function $G_{j,k}$ is specified from the physical mechanism and may
contain both main effects and parent interactions.

Importantly, $X_{j,k}^{(r)}(y)$ is a normalized physical stress index,
not an onset probability. This distinction allows a threshold parameter
to retain a physical interpretation after the underlying covariates have
been transformed to a documented common scale.

Let $\sigma(x)=1/[1+\exp(-x)]$ denote the standard logistic function.
For each non-root node, define the normalized conditional-severity
response
\begin{equation}
\Psi_j(x)
=
\frac{
\sigma\!\left(k_j(x-\theta_j)\right)
-
\sigma\!\left(-k_j\theta_j\right)
}{
\sigma\!\left(k_j(1-\theta_j)\right)
-
\sigma\!\left(-k_j\theta_j\right)
},
\qquad x\in[0,1],
\label{eq:severity-response}
\end{equation}
where $\theta_j\in(0,1)$ is the threshold on the normalized physical
stress scale and $k_j>0$ controls the steepness of the transition.

The normalization gives $\Psi_j(0)=0$, $\Psi_j(1)=1$, and
$0\leq\Psi_j(x)\leq1$. Exact boundary anchoring is a convenient
normalization for the controlled model, not a theoretical necessity:
Lemma~\ref{lem:monot} through Theorem~\ref{thm:certified-bound} require only
that the severity response be non-decreasing and map into $[0,1]$. The
parameters $\theta_j$ and $k_j$ must be estimated using the same
transformation and physical covariates used to construct
$X_{j,k}^{(r)}(y)$. They must not be calibrated on the triggering-probability
scale.

The edge hazards, occurrence states, and severity response together define the realized cascade. Let $U_{j,k}\in(0,1)$ be a scenario-level latent uniform variable used
to realize the onset of non-root node $v_j$. The same draw is retained
throughout the propagation steps of a given scenario. Initialize
$(Z_j^{(0)}(y),S_j^{(0)}(y))=(Z_j^{0}(y),S_j^{0}(y))$. Conditional on
the exogenous shocks, climate state, spatial unit, and latent uniforms,
each non-root node evolves according to
\begin{equation}
\begin{aligned}
Z_j^{(r+1)}(y)
&=
\max\left\{
Z_j^{(r)}(y),
\mathbf{1}\!\left[U_{j,k}<\pi_{j,k}^{(r+1)}(y)\right]
\right\},
\\
S_j^{(r+1)}(y)
&=
\max\left\{
S_j^{(r)}(y),
Z_j^{(r+1)}(y)
\Psi_j\!\left(X_{j,k}^{(r)}(y)\right)
\right\}.
\end{aligned}
\label{eq:cascade}
\end{equation}

For a root node, the endogenous onset probability is set to zero and its
state remains equal to the exogenous pair $(Z_j^{0}(y),S_j^{0}(y))$.
For a non-root node, the first line of Equation~\eqref{eq:cascade}
realizes onset and makes persistence explicit: once $Z_j^{(r)}=1$, it
cannot return to zero. The second line analogously preserves previously
realized severity. Under the initialization above and the monotonicity
conditions used below, this recursion is equivalent to taking the maximum
against the exogenous state at every iteration, but it displays the
persistence property directly.

The maximum operators preserve an event or severity once it has occurred
within the modeled event window. If the functions
$q_{ij}$, $q^{\mathrm{int}}_{i\ell,j}$, and $G_{j,k}$ are
coordinate-wise non-decreasing and all cumulative hazard contributions
are non-negative, the realized cascade operator is order-preserving
conditional on the latent uniforms. On a finite event-scale DAG, the
states can consequently be evaluated in topological order and stabilize
after a finite number of propagation steps.

Residual dependence among the latent uniforms may be introduced through
a separately specified dependence model when it is scientifically
required. Such residual dependence represents common causes not captured
by the observed covariates and must not duplicate the directional
dependence already represented by the cascade edges.

\subsection{Financial Loss and Contractual Aggregation}
\label{subsec:financial-contract-mapping}

After the event-scale cascade has reached closure, the physical states are translated into component losses and then into the declared reinsurance contract. Let $\mathbf{S}^{*}(y)$ denote the final realized severity vector. Since
$S_j^{*}(y)=0$ whenever $Z_j^{*}(y)=0$, occurrence and conditional
severity are both reflected in the financial mapping.

Let $a_m\geq0$ be an exposure weight specified before loss-model
estimation, with $\sum_{m=1}^{n}a_m=1$. The exposure-weighted physical
footprint is
\begin{equation}
C(\mathbf{S}^{*})
=
\sum_{m=1}^{n}a_mS_m^{*},
\qquad
0\leq C(\mathbf{S}^{*})\leq1.
\label{eq:systemic-footprint}
\end{equation}

For component $j$, let $\lambda_j>0$ denote its maximum modeled
ground-up loss and let $\gamma_j\geq1$ control the baseline vulnerability
curve. Define the baseline damage fraction as
$d_j(\mathbf{S}^{*})=(S_j^{*})^{\gamma_j}$ and the demand-surge
multiplier as $B_j(\mathbf{S}^{*})
=\exp\{\alpha_jC(\mathbf{S}^{*})\}$, where $\alpha_j\geq0$.

The component-level ground-up loss is
\begin{equation}
L_j(\mathbf{S}^{*})
=
\lambda_j
\frac{
B_j(\mathbf{S}^{*})d_j(\mathbf{S}^{*})
}{
1+
\left[B_j(\mathbf{S}^{*})-1\right]
d_j(\mathbf{S}^{*})
}.
\label{eq:bounded-loss}
\end{equation}

Equation~\eqref{eq:bounded-loss} applies the demand-surge adjustment on
the odds scale. It satisfies $L_j(\mathbf{0})=0$ and
$0\leq L_j(\mathbf{S}^{*})\leq\lambda_j$, so demand surge cannot produce
an unbounded loss. The parameter $\alpha_j$ should be interpreted as a
repair-market amplification parameter only when the modeled losses
occur within a common period of material, labour, and accommodation
scarcity.

A cascade does not automatically constitute a single contractual
occurrence. Let $\mathcal{A}$ denote a declared contractual aggregation
unit and let $\mathcal{E}(\mathcal{A})$ contain the covered events assigned
to that unit. The indicator $\chi_{j,e}$ equals one when component loss
$j$ from event $e$ is covered and allocated to $\mathcal{A}$, and equals
zero otherwise. The aggregated covered loss and the corresponding
single-layer ceded loss are
\begin{equation}
\begin{aligned}
L_{\mathcal{A}}
&=
\sum_{e\in\mathcal{E}(\mathcal{A})}
\sum_{j=1}^{n}
\chi_{j,e}L_{j,e},
\\
L_R^{\mathcal{A}}
&=
\min\left\{
\left(L_{\mathcal{A}}-D_{\mathcal{A}}\right)_+,
M_{\mathcal{A}}
\right\},
\qquad
(x)_+=\max\{x,0\}.
\end{aligned}
\label{eq:contract-loss}
\end{equation}

For an annual aggregate treaty, let $N_y$ be the number of contractually eligible event windows in year $y$, and let $L_{j,e,y}$ be component $j$'s loss from event window $e$. The annual ground-up and ceded losses used in the numerical study are
\begin{equation}
L_{\mathrm{ann},y}
=
\sum_{e=1}^{N_y}\sum_{j=1}^{n}\chi_{j,e,y}L_{j,e,y},
\qquad
L_{R,y}
=
\min\left\{(L_{\mathrm{ann},y}-D)_+,M\right\}.
\label{eq:annual-contract-loss}
\end{equation}
In the controlled numerical design, an eligible event window is an abstract,
non-overlapping aggregation unit whose maximum horizon is the largest declared
edge-lag horizon in the event-scale DAG. Every physical component loss is
allocated to exactly one window, windows do not overlap, and a wildfire and its
modeled downstream post-fire debris-flow component are assigned to the same
window. This one-to-one allocation prevents a component loss from being counted
again in another window before annual aggregation. This is the declared synthetic allocation convention for the annual
aggregate experiment. For an occurrence-cover application, the allocation rule
is replaced by explicit causation and hours-clause provisions.

The frequency law of $N_y$ is part of the actuarial model and cannot be replaced silently by a single-cascade calculation. In the controlled experiment, $N_y$ is Poisson with mean 1.50; this is a prespecified synthetic assumption rather than an empirical frequency estimate.

For an occurrence cover, $\mathcal{A}$ must be defined using the
contractual causation, hours, territorial, and covered-peril clauses.
Cascade components that do not satisfy the same occurrence definition
must be treated as separate occurrences. For an annual aggregate cover,
$\mathcal{A}$ may instead represent the contract year. Reinstatements,
aggregate deductibles, and multiple layers must be incorporated when
they are present in the actual treaty.

In an empirical implementation, the parameters $\lambda_j$, $\gamma_j$,
and $\alpha_j$ must be estimated from exposure, vulnerability, and
repair-cost data available within the training sample. The controlled
numerical experiment in this study uses prespecified synthetic values
and is intended for internal model verification rather than empirical
pricing validation.

\subsection{Analytical Properties of the Cascading Dynamics}
\label{ana}

Throughout this section, the climate evaluation time $y$, spatial unit
$k$, graph topology, and scenario-level latent uniforms are fixed.
Climate non-stationarity enters through changes in the triggering
hazards and physical drivers across calendar periods or climate
scenarios, rather than through changes during the propagation
iterations of a single event.

Define the admissible state space as
$\mathfrak{X}=\{(\mathbf{z},\mathbf{s})\in
\{0,1\}^n\times[0,1]^n:
\mathbf{0}\leq\mathbf{s}\leq\mathbf{z}\}$. For two admissible states,
we write $(\mathbf{z},\mathbf{s})\preceq
(\widetilde{\mathbf{z}},\widetilde{\mathbf{s}})$ when both
$\mathbf{z}\leq\widetilde{\mathbf{z}}$ and
$\mathbf{s}\leq\widetilde{\mathbf{s}}$ component-wise.

Let $\mathcal{T}_{y,k,\mathbf{u}}$ denote the realized cascade operator
defined by the right-hand side of Equation~\eqref{eq:cascade} for a fixed
latent-uniform vector $\mathbf{u}\in(0,1)^n$. We impose the following
regularity conditions throughout this section:

\begin{enumerate}
    \item all background, edge, and interaction cumulative hazards are
    non-negative;

    \item the functions $q_{ij}$ and
    $q^{\mathrm{int}}_{i\ell,j}$ are coordinate-wise non-decreasing;

    \item each physical stress function $G_{j,k}$ is coordinate-wise
    non-decreasing in its risk-increasing parent states and physical
    drivers;

    \item each conditional-severity function $\Psi_j$ is non-decreasing
    and maps $[0,1]$ into $[0,1]$.
\end{enumerate}

These conditions are restrictions on the orientation of the model
inputs. A physical covariate that reduces risk, such as soil moisture
under a particular definition, must either enter with its appropriate
sign or be transformed into a risk-increasing quantity before the
monotonicity results are applied.

\begin{lemma}
\label{lem:monot}

For any valid exogenous state
$(\mathbf{Z}^{0},\mathbf{S}^{0})\in\mathfrak{X}$ and any fixed
$\mathbf{u}\in(0,1)^n$, the realized cascade operator satisfies:

\begin{enumerate}
    \item $\mathcal{T}_{y,k,\mathbf{u}}$ maps
    $\mathfrak{X}$ into itself;

    \item if
    $(\mathbf{z},\mathbf{s})\preceq
    (\widetilde{\mathbf{z}},\widetilde{\mathbf{s}})$, then
    $\mathcal{T}_{y,k,\mathbf{u}}(\mathbf{z},\mathbf{s})
    \preceq
    \mathcal{T}_{y,k,\mathbf{u}}
    (\widetilde{\mathbf{z}},\widetilde{\mathbf{s}})$;

    \item the sequence generated by Equation~\eqref{eq:cascade} is
    non-decreasing under $\preceq$.
\end{enumerate}
\end{lemma}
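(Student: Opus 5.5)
The plan is a direct coordinate-wise verification from the definitions in Equations~\eqref{eq:onset-probability}--\eqref{eq:cascade}. Write $\mathcal{T}=\mathcal{T}_{y,k,\mathbf{u}}$ and $(\mathbf{z}',\mathbf{s}')=\mathcal{T}(\mathbf{z},\mathbf{s})$. Root nodes are handled separately: their coordinates are frozen at the exogenous pair $(Z_j^{0},S_j^{0})\in\mathfrak{X}$, so all three claims hold trivially for them. I will therefore concentrate on a non-root node $j$ and treat the three parts in order, with part~3 derived from parts~1 and~2 together with the persistence built into the maxima.

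\textbf{Part 1 (invariance).} Fix $(\mathbf{z},\mathbf{s})\in\mathfrak{X}$. Then $z_j'=\max\{z_j,\mathbf{1}[u_j<\pi_j]\}\in\{0,1\}$. Since $\Psi_j$ maps into $[0,1]$, we have $s_j'=\max\{s_j,z_j'\Psi_j(X_j)\}\in[0,1]$. For the admissibility constraint $s_j'\le z_j'$ there are two cases:
\begin{itemize}
\item if $z_j'=0$, then $z_j=0$, hence $s_j=0$ and $s_j'=0$;
\item if $z_j'=1$, then $s_j'\le 1$ trivially.
\end{itemize}
Here we only need $\pi_j\in[0,1]$ and $X_j\in[0,1]$. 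The first follows from the non-negativity of all cumulative hazards in Equation~\eqref{eq:onset-probability} (condition~1) together with $q$ mapping into $[0,1]$. The second is assumed in Equation~\eqref{eq:physical-stress}.

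\textbf{Part 2 (order preservation).} Suppose $(\mathbf{z},\mathbf{s})\preceq(\widetilde{\mathbf{z}},\widetilde{\mathbf{s}})$. The chain of monotonicities runs as follows.
\begin{itemize}
\item Each summand inside the exponent of Equation~\eqref{eq:onset-probability} has the form $z_iq_{ij}(s_i)H_{ij}$ or $z_iz_\ell q^{\mathrm{int}}(s_i,s_\ell)H^{\mathrm{int}}$. These are products of non-negative, coordinate-wise non-decreasing factors (conditions~1 and~2), so the total hazard is non-decreasing in the state and $\pi_j\le\widetilde{\pi}_j$.
\item With $u_j$ held fixed, this gives $\mathbf{1}[u_j<\pi_j]\le\mathbf{1}[u_j<\widetilde{\pi}_j]$, and the maximum with $z_j\le\widetilde z_j$ preserves the order, so $z_j'\le\widetilde z_j'$.
\item By condition~3, $X_j\le\widetilde X_j$ (the physical drivers $\mathbf{x}_{j,k}(y)$ are identical, since $y$ and $k$ are fixed). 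Condition~4 then gives $\Psi_j(X_j)\le\Psi_j(\widetilde X_j)$.
\item Multiplying by $z_j'\le\widetilde z_j'$ (all quantities non-negative) and taking the maximum with $s_j\le\widetilde s_j$ yields $s_j'\le\widetilde s_j'$.
\end{itemize}

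\textbf{Part 3 (monotone sequence).} I would argue directly from the maxima: $Z_j^{(r+1)}\ge Z_j^{(r)}$ and $S_j^{(r+1)}\ge S_j^{(r)}$ for every non-root node by construction, and root coordinates are constant. Hence $(\mathbf{Z}^{(r)},\mathbf{S}^{(r)})\preceq(\mathbf{Z}^{(r+1)},\mathbf{S}^{(r+1)})$, and part~1 with induction on $r$ keeps every iterate in $\mathfrak{X}$. An alternative route runs the induction through part~2 after showing $(\mathbf{Z}^{(0)},\mathbf{S}^{(0)})\preceq\mathcal{T}(\mathbf{Z}^{(0)},\mathbf{S}^{(0)})$, but the direct argument is simpler.

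The step I expect to need the most care is part~2 when $\pi_j$ ties with $u_j$ at the boundary, and also the interpretation of condition~3, which requires $G_{j,k}$ to be monotone in \emph{both} $Z_i$ and $S_i$ jointly. I will state explicitly that the fixed $\mathbf{u}$ and the fixed drivers $\mathbf{x}_{j,k}(y)$ are shared by both arguments of $\mathcal{T}$ (common random numbers); without that, the indicator comparison fails. The strict inequality $u_j<\pi_j$ itself causes no difficulty, since $\pi\le\widetilde{\pi}$ implies the indicator order regardless of ties.
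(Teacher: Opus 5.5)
Your proposal is correct and follows essentially the same route as the paper's proof: invariance of $\mathfrak{X}$ from the binary maxima and $\Psi_j\in[0,1]$; order preservation through the chain monotone hazard $\to$ onset probability $\to$ indicator, and then through $G_{j,k}$ and $\Psi_j$; and part~3 directly from the maxima with the previous iterate. Your separate treatment of root nodes, and your remark that the uniforms and physical drivers are shared by both arguments of $\mathcal{T}_{y,k,\mathbf{u}}$, only make explicit what the paper's proof leaves implicit.
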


\begin{proof}

The onset update is the maximum of two binary quantities and therefore
belongs to $\{0,1\}$. The conditional-severity response belongs to
$[0,1]$. Since the previous state is admissible,
$S_j^{(r)}\leq Z_j^{(r)}\leq Z_j^{(r+1)}$, and
$Z_j^{(r+1)}\Psi_j(X_{j,k}^{(r)})\leq Z_j^{(r+1)}$. It follows that
$0\leq S_j^{(r+1)}\leq Z_j^{(r+1)}$, so the updated state remains in
$\mathfrak{X}$.

Suppose
$(\mathbf{z},\mathbf{s})\preceq
(\widetilde{\mathbf{z}},\widetilde{\mathbf{s}})$. The non-negativity
of the cumulative hazards and the monotonicity of the functions
$q_{ij}$ and $q^{\mathrm{int}}_{i\ell,j}$ imply that the cumulative
hazard entering the target node cannot decrease when a parent
occurrence or severity increases. Because $x\mapsto1-\exp(-x)$ is
increasing, the corresponding onset probabilities satisfy
$\pi_j(\mathbf{z},\mathbf{s})
\leq\pi_j(\widetilde{\mathbf{z}},\widetilde{\mathbf{s}})$.

For fixed $u_j$, the indicator
$\mathbf{1}\{u_j<\pi_j\}$ is non-decreasing in $\pi_j$. Hence the
updated occurrence indicator is order-preserving. The assumed
monotonicity of $G_{j,k}$ and $\Psi_j$ then implies that the updated
conditional severity is also order-preserving. This proves the second
claim.

Finally, both component updates explicitly take the maximum with the
previous iterate. Hence
$(\mathbf{Z}^{(r)},\mathbf{S}^{(r)})\preceq
(\mathbf{Z}^{(r+1)},\mathbf{S}^{(r+1)})$ at every step. This directly
establishes persistence and the third claim.
\end{proof}

\begin{theorem}
\label{thm:exist}

Let $\mathcal{G}=(\mathcal{V},\mathcal{E}_{\mathrm{phys}})$ be a finite
directed acyclic graph. Fix the climate state $y$, spatial unit $k$,
exogenous state $(\mathbf{Z}^{0},\mathbf{S}^{0})\in\mathfrak{X}$,
model parameters, physical drivers, and latent-uniform vector
$\mathbf{u}\in(0,1)^n$.

Then the sequence generated by Equation~\eqref{eq:cascade} reaches a
unique realized cascade closure
$(\mathbf{Z}^{*},\mathbf{S}^{*})$ after finitely many propagation
steps. If $H$ is the maximum number of edges in any directed path of
the graph, then
\begin{equation}
(\mathbf{Z}^{(H)},\mathbf{S}^{(H)})
=
(\mathbf{Z}^{(H+1)},\mathbf{S}^{(H+1)})
=
(\mathbf{Z}^{*},\mathbf{S}^{*}).
\label{eq:finite-convergence}
\end{equation}

The closure is unique conditional on the exogenous shocks, physical
drivers, model parameters, graph topology, and latent uniforms.
Consequently, a specified joint distribution for these random inputs
induces a unique probability distribution for the cascade closure. This
is uniqueness of deterministic forward evaluation once all stochastic
primitives are fixed; it does not imply statistical identifiability of
parameters from observed data or observational equivalence of competing
model specifications.
\end{theorem}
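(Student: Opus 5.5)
The plan is to prove Equation~\eqref{eq:finite-convergence} by induction over a \emph{height} function on the DAG, and then read off uniqueness from the fact that the recursion is a deterministic map. For each node $v_j$, define $h(j)$ as the maximum number of edges in a directed path ending at $v_j$. Then $h(j)=0$ exactly when $\mathcal{N}^{\mathrm{in}}_j=\emptyset$, and $h(j)\le H$ for every $j$ because $\mathcal{G}$ is finite and acyclic. Every predecessor $i\in\mathcal{N}^{\mathrm{in}}_j$ satisfies $h(i)\le h(j)-1$, and the same holds for every member of an interaction pair in $\mathcal{I}_j$, since interactions involve parents of $j$.

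The inductive claim is:
\[
(Z_j^{(r)},S_j^{(r)})=(Z_j^{(h(j))},S_j^{(h(j))})\quad\text{for all } r\ge h(j).
\]
The base case covers nodes with $h(j)=0$. These are root nodes, whose state is by construction frozen at the exogenous pair $(Z_j^{0},S_j^{0})$ for every $r$.

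For the inductive step, take $j$ with $h(j)=m\ge1$ and assume the claim for all nodes of height below $m$. Then every parent state, and hence every input to $\pi_{j,k}^{(r+1)}$ in Equation~\eqref{eq:onset-probability} and to $X_{j,k}^{(r)}$ in Equation~\eqref{eq:physical-stress}, is constant for all $r\ge m-1$. The physical drivers, parameters and $u_j$ are fixed throughout. Write $\bar\pi_j$ and $\bar X_j$ for these constant values. For $r\ge m-1$ the update therefore reduces to
\[
Z_j^{(r+1)}=\max\{Z_j^{(r)},\bar\zeta_j\},\qquad S_j^{(r+1)}=\max\{S_j^{(r)},Z_j^{(r+1)}\Psi_j(\bar X_j)\},
\]
with the constant $\bar\zeta_j=\mathbf{1}[u_j<\bar\pi_j]$. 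Because $\max$ is idempotent, applying the first line at $r=m$ gives $Z_j^{(m+1)}=Z_j^{(m)}$. Then $Z_j^{(m+1)}\Psi_j(\bar X_j)=Z_j^{(m)}\Psi_j(\bar X_j)\le S_j^{(m)}$, so $S_j^{(m+1)}=S_j^{(m)}$, and the same argument repeats for all larger $r$. Since $h(j)\le H$ for all $j$, the whole state is constant from step $H$ onward, which yields Equation~\eqref{eq:finite-convergence}. Lemma~\ref{lem:monot} is not strictly needed here, but it confirms that the iterates stay in $\mathfrak{X}$ and are non-decreasing. This is consistent with stabilization, and it means the closure is the least upper bound of the iterates.

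Uniqueness is then immediate. With all primitives fixed, each $\mathcal{T}_{y,k,\mathbf{u}}$ is a single-valued map, so the sequence, and hence $(\mathbf{Z}^{*},\mathbf{S}^{*})=\mathcal{T}^{H}_{y,k,\mathbf{u}}(\mathbf{Z}^{0},\mathbf{S}^{0})$, is determined. For the distributional statement, I will note that the closure is an $H$-fold composition of Borel-measurable maps in the primitives. These are built from indicators of $\{u_j<\pi_j\}$, continuous hazard transforms, maxima, and the measurable functions $q_{ij}$, $q^{\mathrm{int}}_{i\ell,j}$, $G_{j,k}$ and $\Psi_j$. The closure is therefore a random element, and its law is the pushforward of the joint law of the inputs. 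I will add a remark that this concerns forward evaluation only and says nothing about identifiability.

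The main obstacle is the one-step lag in the severity update, which uses $X_{j,k}^{(r)}$ rather than $X_{j,k}^{(r+1)}$, combined with pinning the bound at exactly $H$ rather than $H+1$. The induction must check that parent constancy from step $m-1$ already fixes both $\bar\pi_j$, which enters at step $m$, and $\bar X_j$, which also enters at step $m$. It must also check that the idempotence argument starts at $r=m$ and not $m+1$. I will verify this bookkeeping carefully. I will also address the edge case of non-root nodes with nonzero exogenous background onsets, which the $\max$ with the prior iterate absorbs without changing the argument.
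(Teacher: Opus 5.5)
Your proof is correct and follows the paper's argument: induction on the maximal in-path depth, with roots frozen at the exogenous pair and each depth-$m$ node stabilizing by step $m$ once its parents are fixed from step $m-1$, and uniqueness read off from deterministic forward evaluation. Your explicit absorption argument for the $\max$-recursion ($Z_j^{(m+1)}=Z_j^{(m)}$ and $S_j^{(m)}\ge Z_j^{(m)}\Psi_j(\bar X_j)$) and your pushforward remark make precise steps that the paper states only informally; the remark takes measurability of $G_{j,k}$ and the response functions for granted, as the paper does.
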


\begin{proof}

Because the graph is a finite DAG, its nodes admit a topological
ordering. Let $d(j)$ denote the maximum number of directed edges in a
path terminating at node $v_j$. Root nodes have depth zero, and
$H=\max_j d(j)$.

A root node has no endogenous parents. By construction, its occurrence
and severity states remain equal to
$(Z_j^0,S_j^0)$ at every propagation step. Hence all root states are
stable at the initial step.

Suppose that every node with depth at most $m-1$ has reached its final
state by iteration $m-1$. Every parent of a node with depth $m$ has
depth at most $m-1$. Therefore, all quantities entering its onset
probability and physical stress function are fixed from iteration
$m-1$ onward. The fixed latent uniform then uniquely determines its
onset indicator, and the realized onset together with the fixed physical
stress uniquely determines its conditional severity. Thus every node
of depth $m$ stabilizes no later than iteration $m$.

Induction over node depth proves Equation~\eqref{eq:finite-convergence}.
Uniqueness follows from the same topological argument. The state of
each root is uniquely fixed by the exogenous inputs. Once the states of
all predecessors of a non-root node are fixed, its onset probability,
onset indicator, physical stress, and conditional severity are uniquely
determined. Proceeding through a topological ordering uniquely
determines every component of the realized closure.
\end{proof}

\begin{corollary}
\label{cor:comparative-statics}

Consider two cascade specifications on the same DAG with the same
latent-uniform vector, response functions, and graph topology. Suppose
that the second specification has component-wise larger exogenous
occurrence indicators and severities, edge probabilities, background
hazards, interaction hazards, and risk-increasing physical drivers.

Then its realized closure is no smaller than the closure of the first
specification under the order $\preceq$.
\end{corollary}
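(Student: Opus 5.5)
I would prove the corollary by induction over node depth along a topological ordering of the DAG, the same device used in the proof of Theorem~\ref{thm:exist}. Lemma~\ref{lem:monot} only compares one operator at two ordered states. Here the operators themselves differ, so the monotonicity has to be extended to the parameters. Denote the two specifications by $\mathcal{T}$ and $\widetilde{\mathcal{T}}$, with closures $(\mathbf{Z}^{*},\mathbf{S}^{*})$ and $(\widetilde{\mathbf{Z}}^{*},\widetilde{\mathbf{S}}^{*})$, and let $d(j)$ be the depth function from that proof. By Theorem~\ref{thm:exist}, each closure is obtained by evaluating nodes in topological order, each node using only its parents' final states. It therefore suffices to show $Z_j^{*}\le\widetilde Z_j^{*}$ and $S_j^{*}\le\widetilde S_j^{*}$ for every $j$, by induction on $d(j)$.

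\textbf{Base case.} Root nodes keep their exogenous pair, so the claim is exactly the assumed ordering $(\mathbf{Z}^{0},\mathbf{S}^{0})\preceq(\widetilde{\mathbf{Z}}^{0},\widetilde{\mathbf{S}}^{0})$.

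\textbf{Inductive step, onset.} Take a non-root node $j$ and assume the ordering holds for all its parents. In Equation~\eqref{eq:onset-probability}, the total cumulative hazard is a sum of three kinds of terms:
\begin{itemize}
\item the background term $H_{0,j,k}$, which is larger by assumption;
\item edge terms $Z_i q_{ij}(S_i)H_{ij,k}$, each a product of non-negative factors that are each non-decreasing, since $H_{ij,k}=-\log(1-w_{ij,k})$ is increasing in $w_{ij,k}$, the $q_{ij}$ are common and non-decreasing, and the parent states are ordered by the induction hypothesis;
\item interaction terms, treated in the same way.
\end{itemize}
Hence $\pi_j\le\widetilde\pi_j$. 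With the common uniform $u_j$, this gives $\mathbf{1}[u_j<\pi_j]\le\mathbf{1}[u_j<\widetilde\pi_j]$. Together with the ordered exogenous onset, $Z_j^{*}\le\widetilde Z_j^{*}$.

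\textbf{Inductive step, severity.} The stress $X_j$ is ordered because $G_{j,k}$ is non-decreasing in both the parent states and the risk-increasing drivers. Since $\Psi_j$ is common and non-decreasing, $Z_j^{*}\Psi_j(X_j)\le\widetilde Z_j^{*}\Psi_j(\widetilde X_j)$. Taking the maximum with the ordered exogenous severity gives $S_j^{*}\le\widetilde S_j^{*}$.

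\textbf{Non-root nodes with background onset.} Such a node can carry a nonzero exogenous state. The recursion in Equation~\eqref{eq:cascade} maximizes against the running iterate, which starts from the exogenous state. By the persistence argument of Lemma~\ref{lem:monot}, the closure value at node $j$ therefore equals the maximum of the exogenous pair and the update computed from the parents' final states, using $X_j$ evaluated at those final states. I would state this explicitly, because the induction relies on that characterization.

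\textbf{Main obstacle.} The step I expect to need the most care is the interplay between severity and iterate index. In Equation~\eqref{eq:cascade}, $S_j^{(r+1)}$ uses $X_{j,k}^{(r)}$ at intermediate iterates, and $S_j^{*}$ is a running maximum over those iterates, not simply $\Psi_j$ of the final stress. To handle this, I would first run the comparison stepwise: show $(\mathbf{Z}^{(r)},\mathbf{S}^{(r)})\preceq(\widetilde{\mathbf{Z}}^{(r)},\widetilde{\mathbf{S}}^{(r)})$ for every $r$ by induction on $r$. The key inequality is $\mathcal{T}(\mathbf{x})\preceq\widetilde{\mathcal{T}}(\mathbf{x})\preceq\widetilde{\mathcal{T}}(\widetilde{\mathbf{x}})$. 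The first relation is the parameter monotonicity shown above, applied at a common state. The second is Lemma~\ref{lem:monot} applied to $\widetilde{\mathcal{T}}$. Passing to $r=H$ and invoking Equation~\eqref{eq:finite-convergence} then yields the ordering of the closures. This avoids characterizing the closure node by node.
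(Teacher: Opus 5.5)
Your proof is correct. Your opening sketch is the paper's own argument: induction over a topological ordering, in which the second specification's onset probabilities and stress indices dominate node by node and Lemma~\ref{lem:monot} closes each step.

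That route tacitly relies on the node-level form of the closure, $Z_j^{*}=\max\{Z_j^{0},\mathbf{1}[u_j<\pi_j^{*}]\}$ and $S_j^{*}=\max\{S_j^{0},Z_j^{*}\Psi_j(X_{j,k}^{*})\}$, with $\pi_j^{*}$ and $X_{j,k}^{*}$ evaluated at the final parent states. Persistence alone does not give this form. Part~3 of Lemma~\ref{lem:monot} does: because the iterates are non-decreasing, $\pi_j^{(r+1)}$, $X_{j,k}^{(r)}$ and $Z_j^{(r+1)}\Psi_j(X_{j,k}^{(r)})$ are non-decreasing in $r$, so every running maximum is attained at the last step. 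The obstacle you flag is therefore harmless.

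Your final route is nonetheless genuinely different and avoids the obstacle altogether. You use the sandwich $\mathcal{T}(\mathbf{x})\preceq\widetilde{\mathcal{T}}(\mathbf{x})\preceq\widetilde{\mathcal{T}}(\widetilde{\mathbf{x}})$, iterate it from the ordered exogenous pair at $r=0$, and read off the result at $r=H$ through Equation~\eqref{eq:finite-convergence}. This separates parameter monotonicity at a common state from the state monotonicity that Lemma~\ref{lem:monot} actually asserts. The paper's two-sentence proof leaves that two-operator comparison implicit, since the lemma compares one operator at two states. Your version also uses acyclicity only to identify the $H$-th iterate with the closure. It would therefore carry over to the cyclic settings of Remark~\ref{rem:dag-scope}, where the ordering of two monotone iterate sequences passes to their limits.

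The paper's topological induction is tied to the DAG. In exchange, it mirrors the proof of Theorem~\ref{thm:exist} and shows node by node where the domination originates.
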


\begin{proof}

Under the stated ordering, the target onset probabilities and physical
stress indices in the second specification are no smaller at every
node. The result follows by induction over a topological ordering using
Lemma~\ref{lem:monot}.
\end{proof}

When $\beta_{ij}\geq0$ in Equation~\eqref{eq:dynamic-edge-weight}, an
increase in the climate anomaly weakly increases the corresponding
edge probability while other covariates are held fixed. Therefore,
under the assumptions of Corollary~\ref{cor:comparative-statics}, the
common-random-number closure is non-decreasing in the climate anomaly.
This is a conditional model sensitivity and not an empirically
identified causal effect of temperature.

\begin{remark}
\label{rem:chaos}

Finite-step closure is a computational and mathematical property and
does not imply that the final state is economically moderate. Using the
exposure-weighted footprint in
Equation~\eqref{eq:systemic-footprint}, a systemic cascade may be
declared when $C(\mathbf{S}^{*})\geq\tau_{\mathrm{cat}}$, where
$\tau_{\mathrm{cat}}\in(0,1)$ is specified before evaluating the test
sample.

Because $S_j^{*}=0$ when $Z_j^{*}=0$, the footprint reflects both the
number of realized peril components and their conditional severities.
Equal exposure weights recover the normalized aggregate-severity
measure $n^{-1}\|\mathbf{S}^{*}\|_1$. Catastrophic behavior is therefore
defined through the realized cascade closure, not through
non-convergence or a basin-of-attraction argument.
\end{remark}

\begin{remark}
\label{rem:dag-scope}

The finite-step and uniqueness results are stated for an event-scale DAG.
For networks containing within-event feedback cycles, closure can instead be
defined under a contraction condition, a least-fixed-point selection rule, or
an explicit temporal expansion that represents delayed feedback on an acyclic
time-indexed graph.
\end{remark}

\begin{theorem}
\label{thm:certified-bound}

Fix the graph topology, response functions, coverage rules, and contractual aggregation map. Let $\boldsymbol{\Xi}$ collect the aleatory inputs, including the annual event count, exogenous event-window shocks, and latent onset uniforms, and fix its joint law. Let $\boldsymbol{\eta}$ collect only ordered scenario or stress inputs whose increase cannot reduce an onset probability, physical stress, or financial loss; examples include edge and interaction hazards, risk-increasing physical drivers, loss capacities, and nonnegative demand-surge parameters.

Suppose that these ordered inputs belong to the rectangular stress set
\begin{equation}
\mathcal{U}
=
\left\{
\boldsymbol{\eta}:
\underline{\boldsymbol{\eta}}
\preceq
\boldsymbol{\eta}
\preceq
\overline{\boldsymbol{\eta}}
\right\}.
\label{eq:uncertainty-set}
\end{equation}
For any fixed realization $\boldsymbol{\xi}$ of $\boldsymbol{\Xi}$, evaluate every admissible specification with the same exogenous shocks and latent uniforms. Then
\begin{equation}
L_{\mathcal{A}}(\boldsymbol{\eta},\boldsymbol{\xi})
\leq
L_{\mathcal{A}}(\overline{\boldsymbol{\eta}},\boldsymbol{\xi}),
\qquad
L_R^{\mathcal{A}}(\boldsymbol{\eta},\boldsymbol{\xi})
\leq
L_R^{\mathcal{A}}(\overline{\boldsymbol{\eta}},\boldsymbol{\xi})
\label{eq:pathwise-corner-bound}
\end{equation}
for every $\boldsymbol{\eta}\in\mathcal{U}$. Consequently,
\begin{equation}
\begin{aligned}
\sup_{\boldsymbol{\eta}\in\mathcal{U}}
\mathbb{E}_{\boldsymbol{\Xi}}[L_{\mathcal{A}}(\boldsymbol{\eta},\boldsymbol{\Xi})]
&=
\mathbb{E}_{\boldsymbol{\Xi}}[L_{\mathcal{A}}(\overline{\boldsymbol{\eta}},\boldsymbol{\Xi})],\\
\sup_{\boldsymbol{\eta}\in\mathcal{U}}
\mathbb{E}_{\boldsymbol{\Xi}}[L_R^{\mathcal{A}}(\boldsymbol{\eta},\boldsymbol{\Xi})]
&=
\mathbb{E}_{\boldsymbol{\Xi}}[L_R^{\mathcal{A}}(\overline{\boldsymbol{\eta}},\boldsymbol{\Xi})].
\end{aligned}
\label{eq:expected-corner-bound}
\end{equation}
\end{theorem}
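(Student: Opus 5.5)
The plan is to prove the pathwise inequality \eqref{eq:pathwise-corner-bound} by composing monotone maps, and then to obtain \eqref{eq:expected-corner-bound} by integrating that inequality against the fixed law of $\boldsymbol{\Xi}$.

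\emph{Physical layer.} Fix a realization $\boldsymbol{\xi}$. This fixes the annual event count, every exogenous pair $(\mathbf{Z}^0_e,\mathbf{S}^0_e)$ and every latent uniform vector $\mathbf{u}_e$. Take $\boldsymbol{\eta}\in\mathcal{U}$, so that $\boldsymbol{\eta}\preceq\overline{\boldsymbol{\eta}}$ coordinatewise. For each event window $e$, the two specifications $\boldsymbol{\eta}$ and $\overline{\boldsymbol{\eta}}$ satisfy the hypotheses of Corollary~\ref{cor:comparative-statics}. They share the DAG, the response functions and $\mathbf{u}_e$. They also share the exogenous state, which is a special case of ``component-wise larger''. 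Under $\overline{\boldsymbol{\eta}}$, the edge and interaction hazards and the risk-increasing drivers are no smaller. Edge probabilities enter only through $H_{ij,k}=-\log(1-w_{ij,k})$, which is increasing in $w_{ij,k}$. Theorem~\ref{thm:exist} says each closure is the unique finite-step limit, so the corollary gives $(\mathbf{Z}^*_e,\mathbf{S}^*_e)(\boldsymbol{\eta})\preceq(\mathbf{Z}^*_e,\mathbf{S}^*_e)(\overline{\boldsymbol{\eta}})$ for every $e$.

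\emph{Loss and contract layer.} The next step is to show that the component loss $L_j$ in \eqref{eq:bounded-loss} is non-decreasing jointly in $\mathbf{S}^*$, $\lambda_j$ and $\alpha_j\ge0$. Write $d=d_j\in[0,1]$ and $B=B_j\ge1$, and set $f(B,d)=Bd/(1+(B-1)d)$. Then
\[
\partial_B f=\frac{d(1-d)}{(1+(B-1)d)^2}\ge0,
\qquad
\partial_d f=\frac{B}{(1+(B-1)d)^2}>0 .
\]
The damage fraction $d=(S_j^*)^{\gamma_j}$ is non-decreasing in $S_j^*$. The multiplier $B=\exp(\alpha_j C(\mathbf{S}^*))$ is non-decreasing in $\mathbf{S}^*$ because $a_m\ge0$, and non-decreasing in $\alpha_j$ because $C\ge0$. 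Finally $L_j=\lambda_j f$ with $f\ge0$, so it is non-decreasing in $\lambda_j$. The parameter $\gamma_j$ is held fixed, since it is part of the response functions rather than of $\boldsymbol{\eta}$.

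Coverage indicators $\chi_{j,e}$, the event set $\mathcal{E}(\mathcal{A})$ and the aggregation map are fixed and do not depend on $\boldsymbol{\eta}$. Hence $L_{\mathcal{A}}$ is a nonnegative-coefficient sum of non-decreasing terms, which gives the first inequality. The map $x\mapsto\min\{(x-D)_+,M\}$ is non-decreasing, which gives the second.

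\emph{Expectations.} Both losses are bounded: $0\le L_{\mathcal{A}}\le N\sum_j\lambda_j$ with $\lambda_j\le\overline{\lambda}_j$. Assuming $\mathbb{E}N<\infty$, which holds in the Poisson design, the expectations are finite. Taking expectations of the pathwise bound then gives ``$\le$'' uniformly over $\mathcal{U}$. Equality of the supremum follows because $\overline{\boldsymbol{\eta}}\in\mathcal{U}$ attains the bound.

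\emph{Main obstacle.} The hard part is ensuring that every coordinate of $\boldsymbol{\eta}$ really acts monotonically through all layers. Three points need care.
\begin{enumerate}
\item The sign of $\beta_{ij}$ is not assumed. The stress inputs must therefore be taken as the edge hazards themselves, or climate anomalies with $\beta_{ij}\ge0$, as the statement's definition of ordered inputs stipulates.
\item A larger $\mathbf{S}^*$ raises the footprint $C$ and therefore every $B_m$. This cross-component effect is monotone only because $\alpha_m\ge0$ and $a_m\ge0$.
\item The stress set must not change the event count or the allocation map. This is secured by placing $N_y$ inside $\boldsymbol{\Xi}$ and fixing the aggregation map.
\end{enumerate}
I would state these points explicitly as the verification that Corollary~\ref{cor:comparative-statics} and the loss monotonicity apply.
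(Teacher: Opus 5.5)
Your proposal is correct and follows the paper's proof essentially step for step. Like the paper, you use Corollary~\ref{cor:comparative-statics} to order the closure in each window, show the odds-scale loss map is monotone (your $\partial_B f$ computation is equivalent to the paper's $\partial q_j/\partial C=\alpha_j q_j(1-q_j)\geq0$), sum over a fixed realized set of windows, use that the layer map is non-decreasing, and take expectations with the upper corner attaining the supremum. Your explicit checks of monotonicity in $\lambda_j$ and $\alpha_j$ fill in a point the paper leaves implicit, and your extra assumption $\mathbb{E}N<\infty$ is harmless but unnecessary, since the losses are nonnegative and their expectations are well defined in $[0,\infty]$.
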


\begin{proof}
For one event window, Corollary~\ref{cor:comparative-statics} gives component-wise ordering of the closure under common aleatory inputs. The footprint and the bounded loss map are non-decreasing in every severity component. In particular, with $d_j=(S_j^*)^{\gamma_j}$ and $B_j=\exp\{\alpha_jC(\mathbf{S}^*)\}$, the adjusted damage fraction $q_j=B_jd_j/[1+(B_j-1)d_j]$ satisfies $\partial q_j/\partial d_j>0$ and $\partial q_j/\partial C=\alpha_jq_j(1-q_j)\geq0$. Summing covered losses over a fixed realized number of event windows preserves the ordering, and the layer map $x\mapsto\min\{(x-D)_+,M\}$ is non-decreasing. The pathwise inequalities therefore hold for every $\boldsymbol{\xi}$. Taking expectations preserves them, and the upper corner belongs to $\mathcal{U}$, so it attains the suprema.
\end{proof}

\begin{remark}
\label{rem:bound-scope}
Theorem~\ref{thm:certified-bound} is a conditional stress-test result. It does not place aleatory occurrence indicators at their upper values, convert a joint confidence or posterior region into an independent rectangle, or optimize over an uncertain residual-dependence law. If physical or statistical constraints make the upper-corner combination infeasible, the rectangular result remains conservative but may be unattainable. Correlated feasible sets require constrained or distributionally robust optimization rather than a single corner evaluation.
\end{remark}

\begin{remark}
\label{rem:constrained-corner}
The rectangular corner can be actively misleading when ordered stresses
cannot attain their marginal maxima simultaneously. Consider two ordered
stress inputs $(\eta_1,\eta_2)\in[0,1]^2$ and the monotone loss proxy
$\ell(\boldsymbol{\eta})=\eta_1+\eta_2$. Over the rectangular envelope,
the upper-corner evaluation is $\ell(1,1)=2$. If the scientifically
feasible set is instead
\[
\mathcal{U}_{c}=\{(\eta_1,\eta_2)\in[0,1]^2:\eta_1+\eta_2\leq1.2\},
\]
then $(1,1)$ is infeasible and the true constrained maximum is only $1.2$.
Thus, evaluating the enclosing rectangle overstates the tight stress bound
by $0.8$ in this example. A tight bound over $\mathcal{U}_c$ requires a
constrained optimization, even though the loss map remains monotone.
\end{remark}

\begin{corollary}
\label{cor:universal-loss-bound}

For any realized cascade and any fixed contractual aggregation unit
containing one modeled event, the component losses satisfy
$0\leq L_j\leq\lambda_j$. Consequently,
$0\leq L_{\mathcal{A}}\leq\sum_{j=1}^{n}\lambda_j$ and
$0\leq L_R^{\mathcal{A}}
\leq\min\{(\sum_j\lambda_j-D_{\mathcal{A}})_+,
M_{\mathcal{A}}\}\leq M_{\mathcal{A}}$.
\end{corollary}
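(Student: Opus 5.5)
The argument starts from the bounded loss map in Equation~\eqref{eq:bounded-loss} and then passes the bound through the contract map in Equation~\eqref{eq:contract-loss}. By Theorem~\ref{thm:exist}, the closure $(\mathbf{Z}^*,\mathbf{S}^*)$ lies in $\mathfrak{X}$, and Lemma~\ref{lem:monot} shows that every iterate stays there. Hence $S_j^*\in[0,1]$, and since $\gamma_j\geq1$ the damage fraction $d_j=(S_j^*)^{\gamma_j}\in[0,1]$. Because $a_m\geq0$ and $\sum_m a_m=1$, the footprint in Equation~\eqref{eq:systemic-footprint} satisfies $C\in[0,1]$. Since $\alpha_j\geq0$, this gives $B_j=\exp\{\alpha_jC\}\geq1$.

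The central step is the elementary inequality for the odds-scale ratio $q_j=B_jd_j/[1+(B_j-1)d_j]$. The denominator is at least $1$ because $(B_j-1)d_j\geq0$, so it is strictly positive and $q_j\geq0$. For the upper bound, $q_j\leq1$ is equivalent to $B_jd_j\leq1+B_jd_j-d_j$, that is, $d_j\leq1$, which holds. Multiplying by $\lambda_j>0$ gives $0\leq L_j\leq\lambda_j$. The only boundary case is $d_j=0$, where $q_j=0$ and nothing needs to be checked.

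Next comes aggregation. With a single modeled event in $\mathcal{A}$, $L_{\mathcal{A}}=\sum_j\chi_{j,e}L_{j,e}$ with $\chi_{j,e}\in\{0,1\}$. Each term lies in $[0,\lambda_j]$, so $0\leq L_{\mathcal{A}}\leq\sum_j\lambda_j$. The layer map $x\mapsto\min\{(x-D_{\mathcal{A}})_+,M_{\mathcal{A}}\}$ is non-decreasing and nonnegative, as already used in the proof of Theorem~\ref{thm:certified-bound}. Applying it to $L_{\mathcal{A}}\leq\sum_j\lambda_j$ yields $0\leq L_R^{\mathcal{A}}\leq\min\{(\sum_j\lambda_j-D_{\mathcal{A}})_+,M_{\mathcal{A}}\}$, and the last quantity is trivially at most $M_{\mathcal{A}}$.

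I expect no real obstacle. The only point needing care is the implicit assumption $M_{\mathcal{A}}\geq0$ (and $D_{\mathcal{A}}\geq0$ is not actually needed), together with checking that the one-event restriction is exactly what makes $\sum_j\lambda_j$ the right cap. With several events, the cap would instead scale with $|\mathcal{E}(\mathcal{A})|$.
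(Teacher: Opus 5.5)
Your proof is correct and takes the same route the paper does. The paper gives no separate proof: it treats the corollary as immediate from the property $0\leq L_j\leq\lambda_j$ asserted after Equation~\eqref{eq:bounded-loss}, summation over a single event, and the monotone layer map used in the proof of Theorem~\ref{thm:certified-bound}, and you additionally verify the odds-scale inequality (it reduces to $d_j\leq1$) and make explicit the tacit requirement $M_{\mathcal{A}}\geq0$ for nonnegativity of the ceded loss.
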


For an aggregation unit containing multiple covered events, the
corresponding universal bound must also account for the number of events,
event limits, aggregate limits, and any contractual reinstatements.

\section{Estimation and Data Protocol}
\label{data}

This section specifies the data construction, estimation, validation,
and scenario propagation procedures required for an empirical
implementation of the CCRN. The numerical results reported in the
present paper are based on controlled synthetic experiments.
Accordingly, the procedures below define a future empirical
implementation protocol and must not be interpreted as evidence that
the current model parameters have already been estimated from observed
California insurance claims.

\subsection{Core Occurrence, Severity, and Scenario Model}
\label{subsec:estimation-aggregation}

An empirical implementation first prespecifies the event-scale topology,
risk sets, lag windows, event definitions, and contractual aggregation rule.
The detailed product-construction, harmonization, and training-only workflow
is given in Appendix~\ref{app:data-workflow}. The main text retains the core
statistical equations needed to connect the empirical model to the cascade.

With the event states and risk sets frozen, occurrence and severity are estimated as two linked parts of the same node-level model. For each target node, all admissible parent contributions and
prespecified parent interactions are included jointly. Conditional on
the target being at risk, the discrete-time onset probability is
modeled as
\begin{equation}
\begin{aligned}
\Pr(Y_{j,k,t}=1\mid R_{j,k,t}=1)
=
1-\exp\Bigg\{
&-H_{0,j,k,t}
-\sum_{i\in\mathcal{N}^{\mathrm{in}}_j}
Q_{ij,k,t}H_{ij,k,t}
\\
&-\sum_{(i,\ell)\in\mathcal{I}_j}
Q^{\mathrm{int}}_{i\ell,j,k,t}
H^{\mathrm{int}}_{i\ell,j,k,t}
\Bigg\}.
\end{aligned}
\label{eq:joint-trigger-hazard}
\end{equation}

The background cumulative hazard $H_{0,j,k,t}$ represents target onsets
not attributed to the modeled parents. The edge cumulative hazard is
specified as
$H_{ij,k,t}=H_{ij,k}^{0}
\exp\{\beta_{ij}z_k(t)
+\boldsymbol{\gamma}_{ij}^{\top}\mathbf{c}_{k,t}\}$, and its
corresponding probability contribution is
$w_{ij,k,t}=1-\exp(-H_{ij,k,t})$. This is algebraically consistent with
Equation~\eqref{eq:dynamic-edge-weight}.

The interaction cumulative hazard may have its own climate and physical
covariates. For example, the wildfire and rainfall interaction for
post-fire debris flow may depend on burn severity, rainfall intensity,
slope, soil type, time since fire, and antecedent precipitation.
Interaction terms are specified from the physical mechanism rather than
selected through an unrestricted search over all possible products.

Main-effect and interaction hazards can nevertheless be weakly identifiable,
especially when joint parent activation is rare. An empirical fit should impose
a hierarchy rule under which an interaction is freely estimated only when the
corresponding main effects are retained, center interacting covariates before
forming products, and use shrinkage toward zero for the interaction coefficient.
The combined target-node hazard is the primary estimable and validated quantity;
its decomposition into main and interaction channels should not be overinterpreted
when posterior or bootstrap uncertainty indicates weak separation.

Spatial heterogeneity is represented through hierarchical random
effects or region-specific partial pooling. Seasonal terms and known
synoptic drivers are included when required by the target process.
Remaining temporal dependence is examined through residual diagnostics
and accommodated using temporal random effects, cluster-robust
inference, or block resampling. First-order differencing is not treated
as sufficient evidence of stationarity or causal identification.

Occurrence-model validation should report effective onset counts,
exposure time in the risk set, calibration curves, Brier or log scores,
and appropriate residual diagnostics. Reporting coefficient
significance without predictive calibration is insufficient, especially
when target onsets are rare.

The occurrence model determines whether a new target onset is realized; the second stage models its physical magnitude. Conditional severity is estimated only for observations with a realized
target onset. For each observed onset, the physical stress index is
constructed using the same function $G_{j,k}$ and the same physical
covariates specified in Equation~\eqref{eq:physical-stress}. The
conditional mean severity is linked to physical stress through the
normalized response $\Psi_j$ in
Equation~\eqref{eq:severity-response}.

For severities strictly between zero and one, a hierarchical beta model
may be written as
\begin{equation}
S_j^k(t)\mid
\{Y_{j,k,t}=1,X_{j,k,t}\}
\sim
\operatorname{Beta}
\left(
\mu_{j,k,t}\phi_j,
[1-\mu_{j,k,t}]\phi_j
\right),
\qquad
\mu_{j,k,t}=\Psi_j(X_{j,k,t}),
\label{eq:conditional-severity-model}
\end{equation}
where $\phi_j>0$ controls conditional dispersion. If exact boundary
values zero or one occur after normalization, a zero-one-inflated beta
model or another bounded response distribution is required.

Equation~\eqref{eq:conditional-severity-model} is the observational model
used to estimate the conditional mean response $\Psi_j$ and its dispersion.
The realized cascade recursion propagates the fitted conditional mean
$\Psi_j(X_{j,k,t})$, while the beta innovation represents residual measurement
and local-heterogeneity variation around that mean. This separation preserves
the deterministic forward cascade conditional on its fitted inputs and keeps
the closure and order-preservation results aligned with the numerical
implementation.

Exact anchoring at $0$ and $1$ may also be relaxed in empirical work by using a
flexible monotone spline or another bounded monotone mean function. The analytical
results continue to hold provided the fitted response remains non-decreasing and
maps into $[0,1]$; exact endpoint equality is not required.

The parameters of $G_{j,k}$, the severity-response functions
$q_{ij}$, the physical threshold $\theta_j$, the steepness $k_j$, and
the dispersion parameter are estimated using training observations.
Externally documented physical ranges may be imposed as constraints or
prior distributions. They must not be selected after examining
test-sample losses.

Occurrence and conditional severity may be estimated jointly in a
hierarchical hurdle model or sequentially with full propagation of
first-stage uncertainty. A sequential point-estimation procedure that
treats the estimated occurrence parameters as known in the severity
stage understates parameter uncertainty.

Severity-model validation should report a proper distributional score,
calibration by predicted-severity bins, residual diagnostics, and
performance within physically relevant tail regions. Validation based
only on mean squared error is insufficient for a reinsurance
application.

After the node-level occurrence and severity models have been fitted, their outputs are translated into spatially consistent monetary losses before contract terms are applied. Region-specific cascades are evaluated using region-specific hazards,
physical stress functions, and exposure definitions. Cell-specific
thresholds and vulnerabilities are not replaced by simple
total-insured-value-weighted averages before cascade evaluation.

If $L_{\mathcal{A},k}$ is already expressed as a monetary loss for
spatial unit $k$, portfolio loss is obtained by direct summation. If
$\ell_{\mathcal{A},k}$ is instead a normalized loss rate and
$V_k$ is the corresponding insured value, exposure weighting is applied
once:
\begin{equation}
L_{\mathcal{A}}^{\mathrm{portfolio}}
=
\sum_{k=1}^{K}L_{\mathcal{A},k}
=
\sum_{k=1}^{K}V_k\ell_{\mathcal{A},k}.
\label{eq:regional-loss-aggregation}
\end{equation}

Applying normalized exposure weights to losses that are already in
monetary units would double-count the exposure scaling. Contractual
deductibles, limits, occurrence definitions, and annual aggregation
rules are applied at the level specified by the treaty rather than
independently to each spatial cell unless the contract explicitly
requires such treatment.

Future climate scenarios enter only after the historical specification and all transformations have been frozen. The climate-sensitivity parameter $\beta_{ij}$ is estimated from
historical observations and is not estimated by treating future
climate-model simulations as observed outcomes. After historical model
estimation, bias-adjusted climate-model output provides future values
of $z_k(y)$ and other required physical covariates. These values are
inserted into Equation~\eqref{eq:dynamic-edge-weight} and the relevant
physical stress functions without re-estimating the historical
coefficients.

The reference period, anomaly definition, downscaling method, and
bias-adjustment procedure must be reported. Trend-preserving methods
are preferred when the quantity of interest is a future change in
hazard. Multiple climate models, ensemble members, and emissions
scenarios are retained rather than collapsed into a single deterministic
trajectory.

Parameter uncertainty, internal climate variability, climate model
uncertainty, emissions scenario uncertainty, and exposure uncertainty
are recorded separately before being combined in the loss simulation.
A single high-emissions pathway is reported as a conditional stress
scenario and not as a unique forecast.

The complete training-only estimation and climate-scenario propagation
workflow is summarized in Algorithm~\ref{alg:calibration}. Its placement at
the end of this section consolidates the occurrence, severity, spatial-loss,
validation, uncertainty, and climate-scenario steps developed above.

\begin{algorithm}[H]
\caption{Training-Only Estimation and Climate-Scenario Propagation}
\label{alg:calibration}
\footnotesize
\begin{algorithmic}[1]
\Require Fixed event-scale DAG and interaction sets; lag windows, event and contract definitions; historical environmental, event, and exposure data; climate-model ensemble; frozen training, validation, and test dates.
\Ensure Fitted occurrence and severity models; frozen transformations; scenario-specific cascade inputs; uncertainty distributions; out-of-sample diagnostics.
\State Prespecify nodes, admissible edges, interactions, risk sets, physical covariates, lag windows, event definitions, and the contractual aggregation unit.
\State Freeze the data-split dates before transformation, lag selection, model selection, or tuning.
\State Harmonize the input products to documented spatial supports while preserving the temporal resolution, quality flags, missingness indicators, and version metadata required by each physical mechanism.
\State Estimate normalization and cross-product calibration rules on the training sample only, then apply the frozen rules to validation, test, and scenario data.
\State Construct occurrence and onset indicators, risk sets, severity-modulated parent contributions, and prespecified interaction contributions.
\For{each target node $j\in\mathcal{V}$}
    \State Fit the joint occurrence-hazard model in Equation~\eqref{eq:joint-trigger-hazard} using all admissible parent and interaction channels.
    \State Fit Equation~\eqref{eq:conditional-severity-model} to realized target onsets and estimate spatial heterogeneity by hierarchical partial pooling.
    \State Retain the joint covariance matrix, bootstrap distribution, or posterior distribution of the fitted parameters.
    \State Evaluate occurrence and severity calibration, proper scores, tail diagnostics, and residual dependence on the validation sample; freeze the selected specification before test evaluation.
\EndFor
\State Translate region-specific cascade outputs into monetary losses, aggregate them using Equation~\eqref{eq:regional-loss-aggregation}, and then apply the declared contract mapping.
\State Quantify parameter and exposure uncertainty through coherent block bootstrap refits or draws from the joint hierarchical posterior.
\For{each climate model, ensemble member, and emissions scenario}
    \State Bias-adjust the required scenario covariates using a trend-preserving rule estimated over the historical reference period.
    \State Insert the projected covariates into Equation~\eqref{eq:dynamic-edge-weight} and the physical stress functions without refitting historical response coefficients.
    \State Propagate parameter, climate-model, internal-variability, and exposure uncertainty through the cascade, financial-loss map, and contractual aggregation.
\EndFor
\State Report sample sizes, event counts, parameter uncertainty, calibration diagnostics, scenario definitions, and out-of-sample contract-level results.
\end{algorithmic}
\end{algorithm}

The controlled numerical study implements this workflow in a synthetic
recovery experiment and propagates the resulting parameter distributions through
the cascade, financial-loss map, and annual aggregate contract.

\section{Numerical Study}
\label{num}

The numerical study is a controlled verification of the CCRN architecture.
All inputs are prespecified synthetic quantities. The main comparison uses
$350{,}000$ contract years, while the matched event-level marginal sample
contains $500{,}000$ cascade realizations. Sensitivity experiments use separate
Monte Carlo samples and the sample sizes stated below.

\subsection{Annual Design and Matched-Marginal Benchmarks}
\label{subsec:annual-design-comparison}

The event-scale graph contains fuel aridity ($v_1$), wildfire ($v_2$), extreme
precipitation or flood ($v_3$), and post-fire debris flow ($v_4$). Fuel aridity
and precipitation are exogenous drivers. Wildfire may arise from an exogenous
background onset or through the fuel-aridity edge. Debris flow may arise from a
background onset or through the joint activation of wildfire, precipitation,
and terrain. Fuel aridity carries no direct insured loss.

For each contract year, the baseline number of eligible event windows satisfies
\[
N_y\sim\operatorname{Poisson}(1.50).
\]
Conditional on $N_y$, event windows are independent and climate inputs are held
fixed within the contract year. Each component loss is assigned to one and only
one event window before annual aggregation. The wildfire and its associated
post-fire debris-flow response are treated as components of the same synthetic
recovery window, so the allocation rule prevents double counting.

\begin{table}[H]
\centering
\small
\caption{Prespecified event-window inputs for the controlled CCRN. The
parameter $q_j$ is the exogenous pre-cascade occurrence probability, and loss
capacities are in USD billions.}
\label{tab:synthetic-design}
\begin{tabular}{@{}lccccc@{}}
\toprule
\textbf{Node} & $\boldsymbol{q_j}$ & \textbf{Conditional severity} & $\boldsymbol{\lambda_j}$ & $\boldsymbol{\gamma_j}$ & $\boldsymbol{\alpha_j}$ \\
\midrule
Fuel aridity & 0.300 & $\operatorname{Beta}(2.2,3.0)$ & 0.0 & 1.00 & 0.00 \\
Wildfire & 0.015 & $\operatorname{Beta}(1.8,4.0)$ & 3.8 & 1.45 & 0.30 \\
Extreme precipitation or flood & 0.100 & $\operatorname{Beta}(1.8,3.8)$ & 2.4 & 1.55 & 0.20 \\
Post-fire debris flow & 0.001 & $\operatorname{Beta}(1.6,4.5)$ & 1.8 & 1.65 & 0.40 \\
\bottomrule
\end{tabular}
\end{table}

Table~\ref{tab:synthetic-design} specifies the node-level inputs of the
controlled event-window design. The exogenous occurrence probabilities
$q_j$ determine background activation before propagation, the beta laws
describe conditional physical severity, and $\lambda_j$, $\gamma_j$, and
$\alpha_j$ determine the loss capacity, vulnerability curvature, and
demand-surge response. The table also shows that fuel aridity acts only as a
physical driver, whereas wildfire, precipitation, and debris flow contribute
directly to insured loss.

The reference triggering probabilities are $p_{12}^{0}=0.35$ and
$p_{(2,3)4}^{0}=0.55$, with climate coefficients $0.35$ and $0.18$ per degree
Celsius. Wind follows $\operatorname{Beta}(2.5,2.0)$ and terrain follows
$\operatorname{Beta}(2.8,1.8)$. The wildfire stress index is $S_1W$, whereas
the debris-flow interaction gate is $(S_2S_3T)^{1/3}$. The normalized logistic
severity responses use $(\theta_2,k_2)=(0.20,8)$ and
$(\theta_4,k_4)=(0.25,9)$. The footprint weights are
$(0,0.45,0.30,0.25)$, and the illustrative annual aggregate layer has
deductible $D=1.50$ and limit $M=0.50$ billion USD. These declarations make
every numerical input to Equations~\eqref{eq:dynamic-edge-weight}--\eqref{eq:annual-contract-loss} explicit.

At $\Delta T=1.2^\circ\mathrm{C}$, event-level CCRN severities are transformed
to pseudo-observations using average empirical ranks, including the common
mid-rank assigned to the point mass at zero. Because the severity margins contain
atoms at zero, the mid-rank transform is a modeling convention rather than a
unique probability-integral transform for the discontinuous margins. A randomized
distributional transform was not used in the present benchmark; sensitivity to
such alternative zero-mass transforms is therefore a limitation of the current
matched-marginal calibration. Normal scores determine the latent correlation
matrix, which is projected to the nearest correlation matrix when required.
Gaussian and Student-$t$ samples are mapped through the same empirical generalized
inverse marginals. The three annual models therefore share the marginal severity
distributions, annual frequency law, financial mapping, and contract terms, while
differing in their joint-dependence construction.

\begin{table}[H]
\centering
\small
\caption{Contract-level matched-marginal comparison. Parentheses contain
model-wise Monte Carlo standard errors.}
\label{tab:pricing-comparison}
\begin{tabular}{@{}lccc@{}}
\toprule
\textbf{Model} & \shortstack{\textbf{Attachment}\\\textbf{(\%)}} & \shortstack{\textbf{Ceded loss}\\\textbf{(USD million)}} & \shortstack{\textbf{Pure premium}\\\textbf{(\% of limit)}} \\
\midrule
Gaussian copula & 5.27 (0.04) & 22.47 (0.19) & 4.49 (0.04) \\
Student-$t$ copula ($\nu=5$) & 5.17 (0.04) & 22.45 (0.17) & 4.49 (0.03) \\
CCRN & 5.15 (0.04) & 22.29 (0.17) & 4.46 (0.03) \\
\bottomrule
\end{tabular}
\end{table}

Table~\ref{tab:pricing-comparison} compares the attachment probability,
expected ceded loss, and pure premium produced by the CCRN and the two
matched-marginal copula benchmarks. The three models give closely aligned
contract-level values and small Monte Carlo standard errors for the selected
layer. This indicates that, after matching the event-level marginals, the
attachment region of this low and relatively narrow layer is only weakly
sensitive to the remaining differences in joint-dependence shape.

\begin{table}[H]
\centering
\small
\caption{Tail metrics for annual aggregate ground-up loss $L_{\mathrm{ann}}$
and retained loss $L_{\mathrm{ret}}=L_{\mathrm{ann}}-L_R$, in USD billions.
Parentheses contain bootstrap standard errors.}
\label{tab:tail-comparison}
\begin{tabular}{@{}lcccc@{}}
\toprule
\textbf{Model} & \shortstack{$\boldsymbol{\operatorname{VaR}_{99.5\%}}$\\\textbf{Ground-up}} & \shortstack{$\boldsymbol{\operatorname{TVaR}_{99.5\%}}$\\\textbf{Ground-up}} & \shortstack{$\boldsymbol{\operatorname{VaR}_{99.5\%}}$\\\textbf{Retained}} & \shortstack{$\boldsymbol{\operatorname{TVaR}_{99.5\%}}$\\\textbf{Retained}} \\
\midrule
Gaussian copula & 3.701 (0.008) & 4.443 (0.028) & 3.201 (0.008) & 3.943 (0.028) \\
Student-$t$ copula ($\nu=5$) & 4.049 (0.022) & 5.016 (0.030) & 3.549 (0.022) & 4.516 (0.030) \\
CCRN & 3.748 (0.009) & 4.759 (0.034) & 3.248 (0.009) & 4.259 (0.034) \\
\bottomrule
\end{tabular}
\end{table}

Table~\ref{tab:tail-comparison} shows that the annual far-tail results separate
more clearly than the central contract metrics. The Student-$t$ construction
produces the largest ground-up and retained VaR and TVaR, the Gaussian
benchmark produces the smallest values, and the CCRN lies between them. The
reported tail standard errors are based on $B_{\mathrm{boot}}=60$ nonparametric
bootstrap replications obtained by i.i.d. resampling of complete simulated
contract years within each model. At the 99.5\% level, the ground-up quantiles
and all corresponding tail observations lie above the exhaustion point
$D+M=2.0$ billion in the reported experiment. Hence the retained loss equals
$L_{\mathrm{ann}}-M$ throughout the relevant tail, which explains why the
retained VaR and TVaR standard errors equal their ground-up counterparts. This
equality is specific to the exhausted-tail regime and is not a general identity
for the layer transformation.

\begin{figure}[H]
\centering
\includegraphics[width=\textwidth]{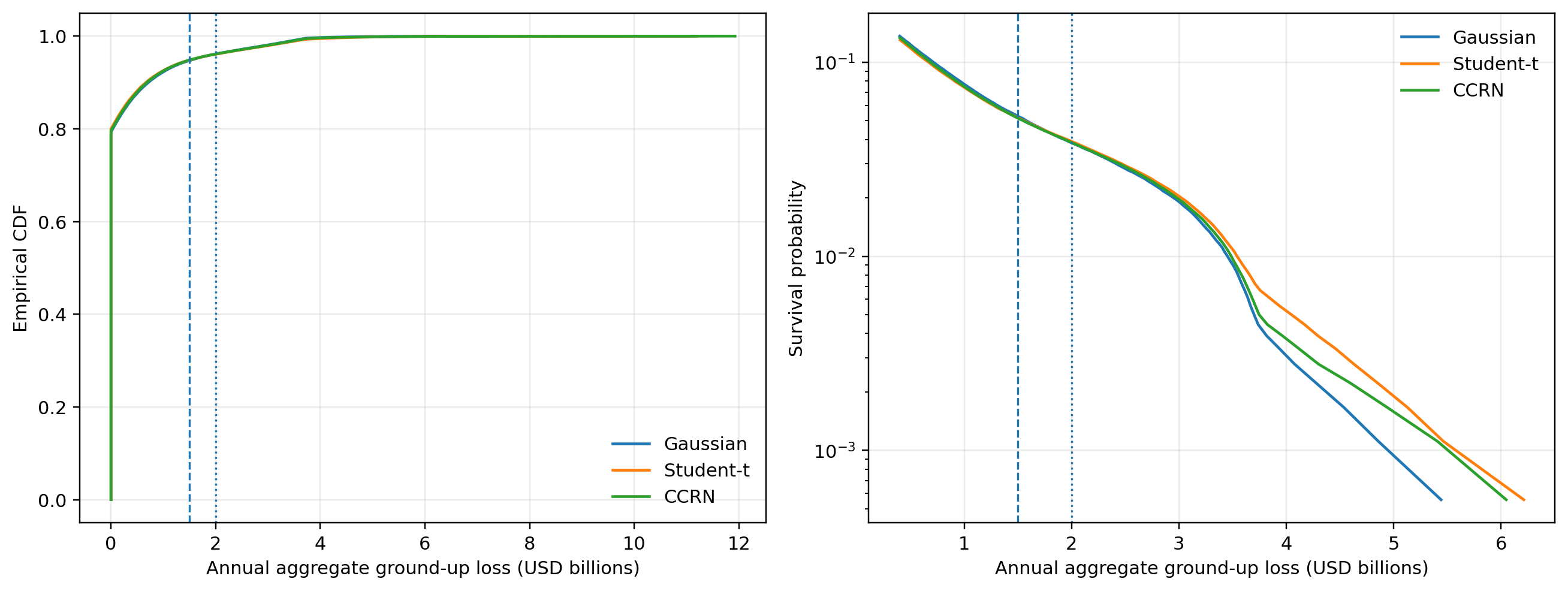}
\caption{Matched event-marginal comparison of annual aggregate ground-up loss.
The dashed line marks attachment and the dotted line marks exhaustion.}
\label{fig:loss-comparison-four-node}
\end{figure}

Figure~\ref{fig:loss-comparison-four-node} shows that the empirical
distributions nearly overlap around the attachment region but separate in the
upper tail. This explains why a low, narrow layer can have similar premiums
under alternative dependence models even when their retained-loss tails differ.

\subsection{Dependence Shape and Contract Sensitivity}
\label{subsec:dependence-contract-sensitivity}

The Student-$t$ degrees of freedom and the exogenous copula correlation are
varied independently to identify the portion of tail variation induced by the
benchmark specification.

\begin{table}[H]
\centering
\small
\caption{Student-$t$ degrees-of-freedom sensitivity. Tail quantities are in USD
billions.}
\label{tab:t-df-sensitivity}
\begin{tabular}{@{}ccccc@{}}
\toprule
$\boldsymbol{\nu}$ & \textbf{Attachment (\%)} & \textbf{Premium (\%)} & $\boldsymbol{\operatorname{VaR}_{99.5\%}}$ & $\boldsymbol{\operatorname{TVaR}_{99.5\%}}$ \\
\midrule
3  & 5.20 & 4.55 & 4.320 & 5.308 \\
5  & 5.17 & 4.49 & 4.049 & 5.016 \\
10 & 5.21 & 4.50 & 3.825 & 4.734 \\
20 & 5.16 & 4.42 & 3.734 & 4.567 \\
\bottomrule
\end{tabular}
\end{table}

Table~\ref{tab:t-df-sensitivity} shows how the Student-$t$ benchmark changes as
the degrees of freedom vary. Each row uses $350{,}000$ contract years and the
same underlying random-number stream. The $\nu=5$ row is the identical
Student-$t$ benchmark used in Table~\ref{tab:pricing-comparison}, so the
attachment probability and pure premium now coincide exactly across the two
tables up to displayed rounding. Lower degrees of freedom, which imply stronger
tail dependence, increase VaR and TVaR substantially, whereas attachment and
pure premium remain comparatively stable. The table therefore demonstrates that
the selected layer price and the far-tail capital metrics respond to different
regions of the annual loss distribution without introducing a separate-seed
comparison artifact.

\begin{table}[H]
\centering
\small
\caption{Equicorrelation sensitivity using positive-definite four-dimensional
correlation matrices. Tail quantities are in USD billions.}
\label{tab:rho-sensitivity}
\resizebox{\textwidth}{!}{%
\begin{tabular}{@{}lccccc@{}}
\toprule
\textbf{Family} & $\boldsymbol{\rho}$ & \textbf{Attachment (\%)} & \textbf{Premium (\%)} & $\boldsymbol{\operatorname{VaR}_{99.5\%}}$ & $\boldsymbol{\operatorname{TVaR}_{99.5\%}}$ \\
\midrule
Gaussian & 0.2 & 5.29 & 4.57 & 3.852 & 4.676 \\
Gaussian & 0.4 & 5.23 & 4.56 & 4.215 & 5.057 \\
Gaussian & 0.6 & 5.26 & 4.67 & 4.587 & 5.492 \\
Student-$t$ & 0.2 & 5.28 & 4.63 & 4.338 & 5.256 \\
Student-$t$ & 0.4 & 5.28 & 4.68 & 4.612 & 5.599 \\
Student-$t$ & 0.6 & 5.23 & 4.70 & 4.828 & 5.919 \\
\bottomrule
\end{tabular}}
\end{table}

Table~\ref{tab:rho-sensitivity} reports the effect of imposing common
correlations on the Gaussian and Student-$t$ benchmarks. Increasing
$\rho$ raises VaR and TVaR much more strongly than it changes attachment or
premium, and the Student-$t$ family retains the larger far-tail response at
each correlation level. The table isolates the sensitivity of the annual loss
distribution to dependence strength while holding the marginal construction
fixed.

\begin{figure}[H]
\centering
\includegraphics[width=0.94\textwidth]{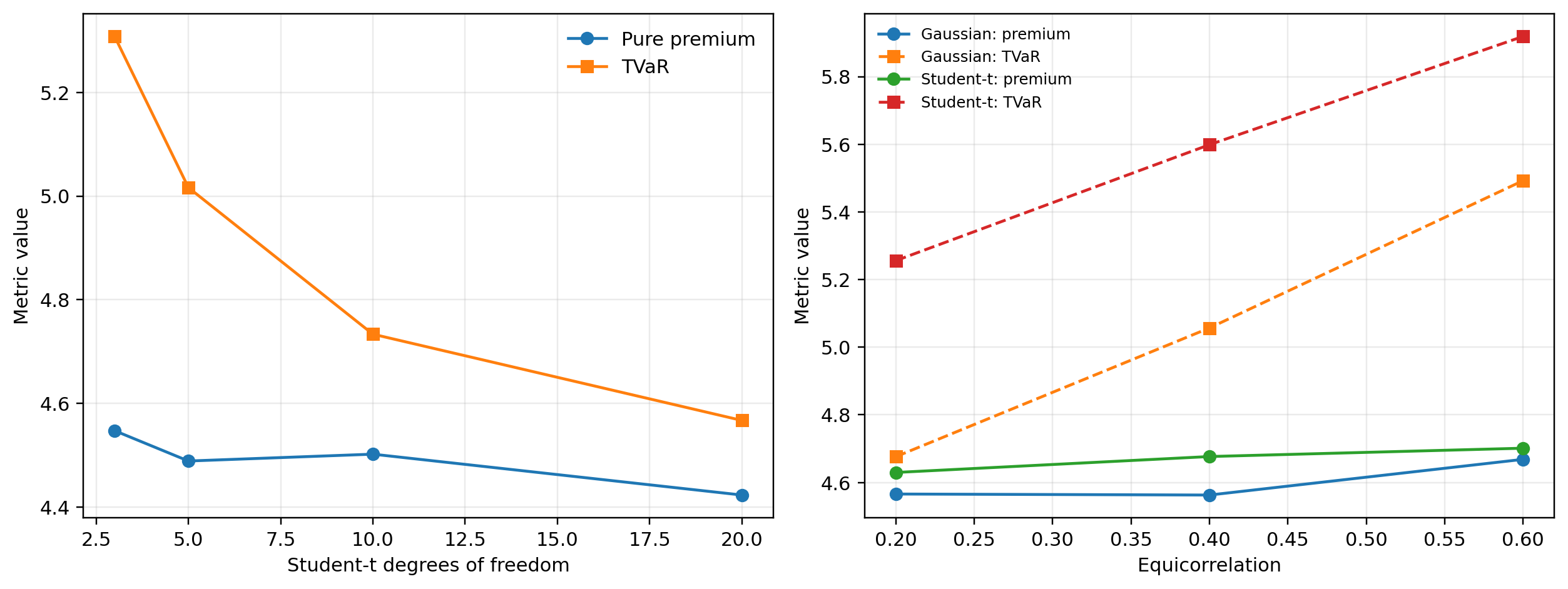}
\caption{Sensitivity of the copula benchmarks to Student-$t$ degrees of freedom
and exogenous equicorrelation.}
\label{fig:copula-sensitivity}
\end{figure}

Figure~\ref{fig:copula-sensitivity} visualizes the two benchmark sensitivity
experiments. The left panel shows that the Student-$t$ TVaR decreases markedly
as the degrees of freedom increase, while the premium changes only modestly.
The right panel shows that stronger equicorrelation produces a pronounced rise
in TVaR for both copula families, with a much smaller change in premium. The
figure makes the contrast between central-layer pricing and far-tail response
visible across the two dependence controls.

The economic effect of dependence changes with the contract position. Table~\ref{tab:contract-grid}
reports pure premiums across a grid of attachment points and limits using the
same annual loss samples.

\begin{table}[H]
\centering
\small
\caption{Pure premium as a percentage of layer limit across the
attachment--limit grid. Monetary contract values are in USD billions.}
\label{tab:contract-grid}
\begin{tabular}{@{}ccccc@{}}
\toprule
\textbf{Attachment} & \textbf{Limit} & \textbf{Gaussian} & \textbf{Student-$t$} & \textbf{CCRN} \\
\midrule
0.5 & 0.5 & 9.78 & 9.36 & 9.52 \\
0.5 & 1.0 & 8.06 & 7.76 & 7.86 \\
0.5 & 2.0 & 5.99 & 5.85 & 5.88 \\
1.5 & 0.5 & 4.49 & 4.49 & 4.46 \\
1.5 & 1.0 & 3.91 & 3.95 & 3.91 \\
1.5 & 2.0 & 2.90 & 2.99 & 2.93 \\
3.0 & 0.5 & 1.41 & 1.57 & 1.48 \\
3.0 & 1.0 & 0.96 & 1.15 & 1.03 \\
3.0 & 2.0 & 0.57 & 0.74 & 0.64 \\
\bottomrule
\end{tabular}
\end{table}

Table~\ref{tab:contract-grid} shows how the normalized pure premium changes as
the attachment point and layer limit are varied. Premiums decline as the layer
moves farther into the tail, and the differences among the models become more
visible at the highest attachment level. At the reference attachment and limit,
the three premiums remain close, whereas high layers place greater weight on
the Student-$t$ tail and therefore exhibit a clearer model separation.

\begin{figure}[H]
\centering
\includegraphics[width=0.94\textwidth]{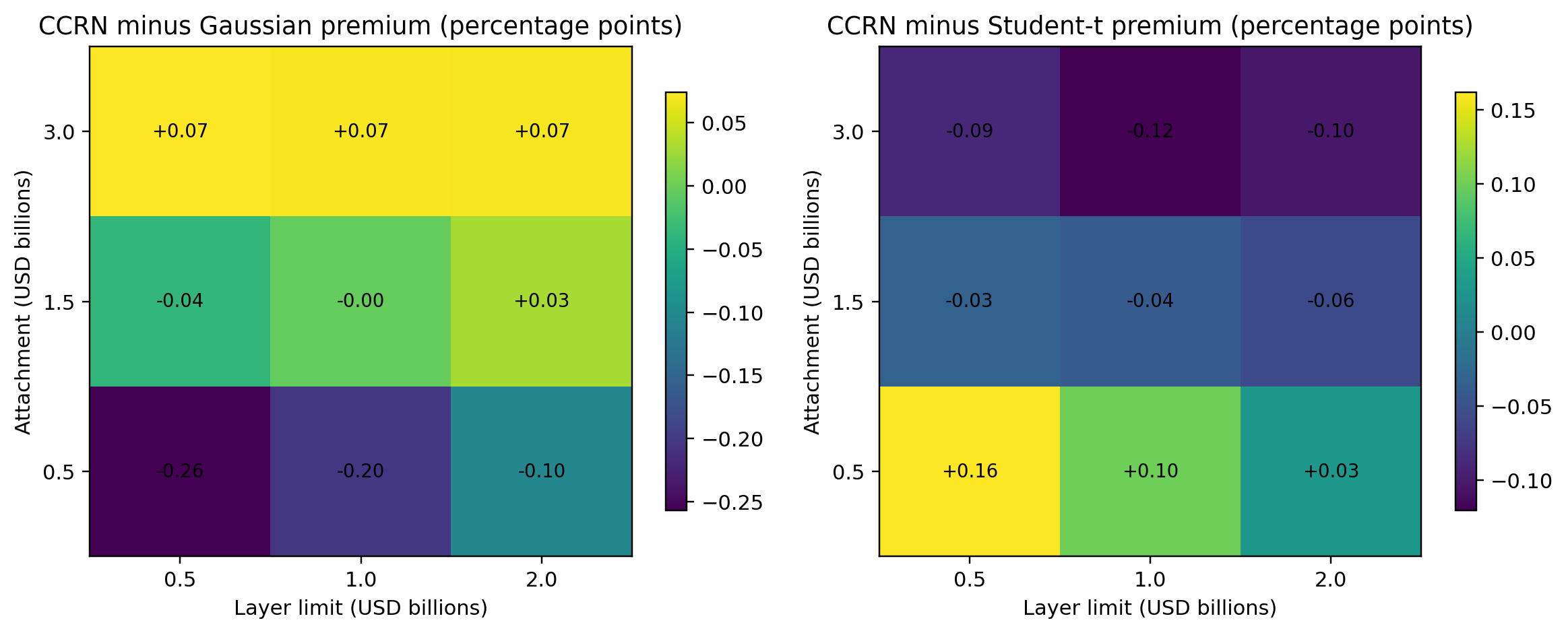}
\caption{Difference between the CCRN premium and each matched-marginal copula
premium across the attachment--limit grid.}
\label{fig:contract-sensitivity}
\end{figure}

Figure~\ref{fig:contract-sensitivity} displays the signed premium differences
between the CCRN and each copula benchmark over the attachment--limit grid. The
left heatmap shows that the CCRN premium moves from below the Gaussian premium
at low attachment to slightly above it at high attachment. The right heatmap
shows that the CCRN exceeds the Student-$t$ premium for the lowest layers but
falls below it for high-attachment layers. The figure therefore converts the
tail-distribution differences into a direct actuarial measure of where model
choice affects contract price.

\subsection{Climate-Conditioned Frequency, Latent Dependence, and the Upper Corner}
\label{subsec:sensitivity-bound}

Climate enters the edge hazards through Equation~\eqref{eq:dynamic-edge-weight}.
The baseline frequency model holds the Poisson mean fixed. A second design uses
\[
N_y\mid\Lambda_y\sim\operatorname{Poisson}(\Lambda_y),\qquad
\log\Lambda_y=\log(1.50)+0.12\,\Delta T,
\]
so the climate input affects both propagation and annual event frequency.

\begin{table}[H]
\centering
\small
\caption{Climate-input sensitivity under fixed and climate-conditioned annual
frequency. Tail quantities are in USD billions.}
\label{tab:climate-frequency-sensitivity}
\resizebox{\textwidth}{!}{%
\begin{tabular}{@{}clcccc@{}}
\toprule
$\boldsymbol{\Delta T}$ & \textbf{Frequency model} & \textbf{Attachment (\%)} & \textbf{Premium (\%)} & $\boldsymbol{\operatorname{VaR}_{99.5\%}}$ & $\boldsymbol{\operatorname{TVaR}_{99.5\%}}$ \\
\midrule
0.0 & Fixed Poisson & 3.93 & 3.31 & 3.632 & 4.377 \\
0.0 & Climate-conditioned Poisson & 3.93 & 3.31 & 3.632 & 4.377 \\
1.0 & Fixed Poisson & 4.93 & 4.25 & 3.745 & 4.819 \\
1.0 & Climate-conditioned Poisson & 5.56 & 4.83 & 3.847 & 4.894 \\
2.0 & Fixed Poisson & 6.25 & 5.47 & 4.117 & 5.269 \\
2.0 & Climate-conditioned Poisson & 7.84 & 6.85 & 4.567 & 5.669 \\
\bottomrule
\end{tabular}}
\end{table}

Table~\ref{tab:climate-frequency-sensitivity} compares two channels through
which the temperature-anomaly input can affect annual losses. Under fixed
frequency, the climate input changes only the propagation hazards; under the
climate-conditioned Poisson specification, it also changes the deterministic
Poisson rate according to $\Lambda_y=1.50\exp(0.12\,\Delta T)$. The two
designs use common random numbers and therefore coincide exactly at
$\Delta T=0$, where both reduce to $N_y\sim\operatorname{Poisson}(1.50)$.
For positive anomalies, attachment, premium, VaR, and TVaR rise under both
designs, with a stronger increase when the annual-frequency channel is active.

\begin{figure}[H]
\centering
\includegraphics[width=0.82\textwidth]{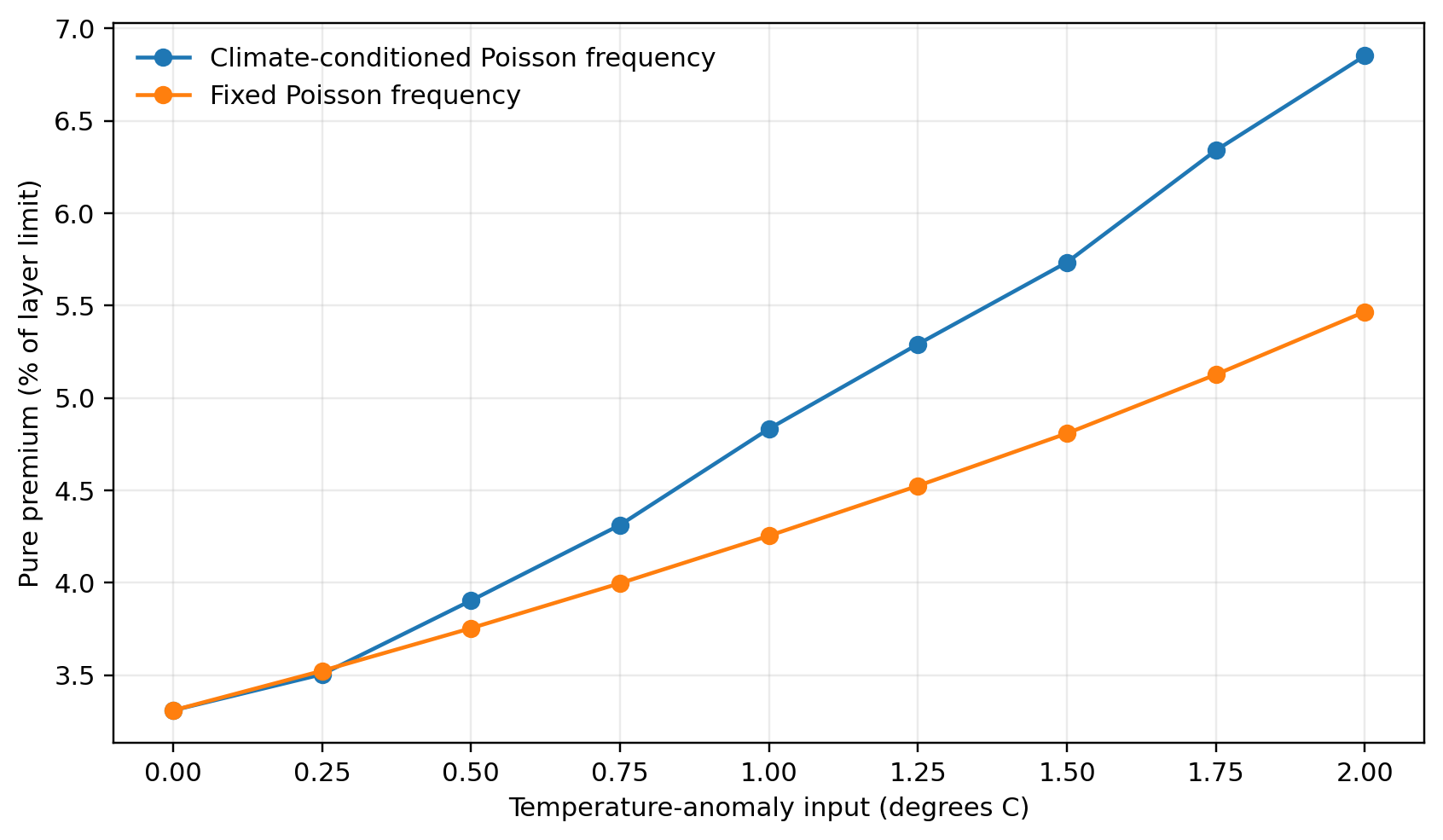}
\caption{Pure-premium sensitivity when the climate input affects propagation
alone and when it also affects annual event frequency.}
\label{fig:premium-sensitivity-four-node}
\end{figure}

Figure~\ref{fig:premium-sensitivity-four-node} traces the pure premium over the
full temperature-anomaly grid for the two frequency specifications. The curves
are close near the reference state but diverge as the anomaly increases,
because the climate-conditioned Poisson design amplifies both event
propagation and annual event count. The figure therefore separates the premium
response attributable to propagation alone from the additional frequency
effect within the same contract map.

Residual dependence is introduced only between the exogenous root-occurrence
uniforms for fuel aridity and precipitation. Downstream wildfire and debris-flow
onsets continue to be generated by the directed cascade, preventing duplication
of the encoded propagation channels.

\begin{table}[H]
\centering
\small
\caption{Sensitivity to latent correlation between exogenous root-occurrence
uniforms. Tail quantities are in USD billions.}
\label{tab:latent-dependence}
\begin{tabular}{@{}ccccc@{}}
\toprule
$\boldsymbol{\rho_U}$ & \textbf{Attachment (\%)} & \textbf{Premium (\%)} & $\boldsymbol{\operatorname{VaR}_{99.5\%}}$ & $\boldsymbol{\operatorname{TVaR}_{99.5\%}}$ \\
\midrule
0.0 & 5.09 & 4.41 & 3.776 & 4.807 \\
0.2 & 5.11 & 4.44 & 3.918 & 4.999 \\
0.4 & 5.14 & 4.48 & 4.138 & 5.214 \\
\bottomrule
\end{tabular}
\end{table}

Table~\ref{tab:latent-dependence} shows the effect of residual dependence
between the exogenous fuel-aridity and precipitation onset uniforms. As
$\rho_U$ increases, all reported risk measures rise, but the changes in VaR and
TVaR are more pronounced than the changes in attachment and premium. The table
indicates that unobserved common root conditions affect the far tail more
strongly than the selected contract attachment region.

\begin{figure}[H]
\centering
\includegraphics[width=0.76\textwidth]{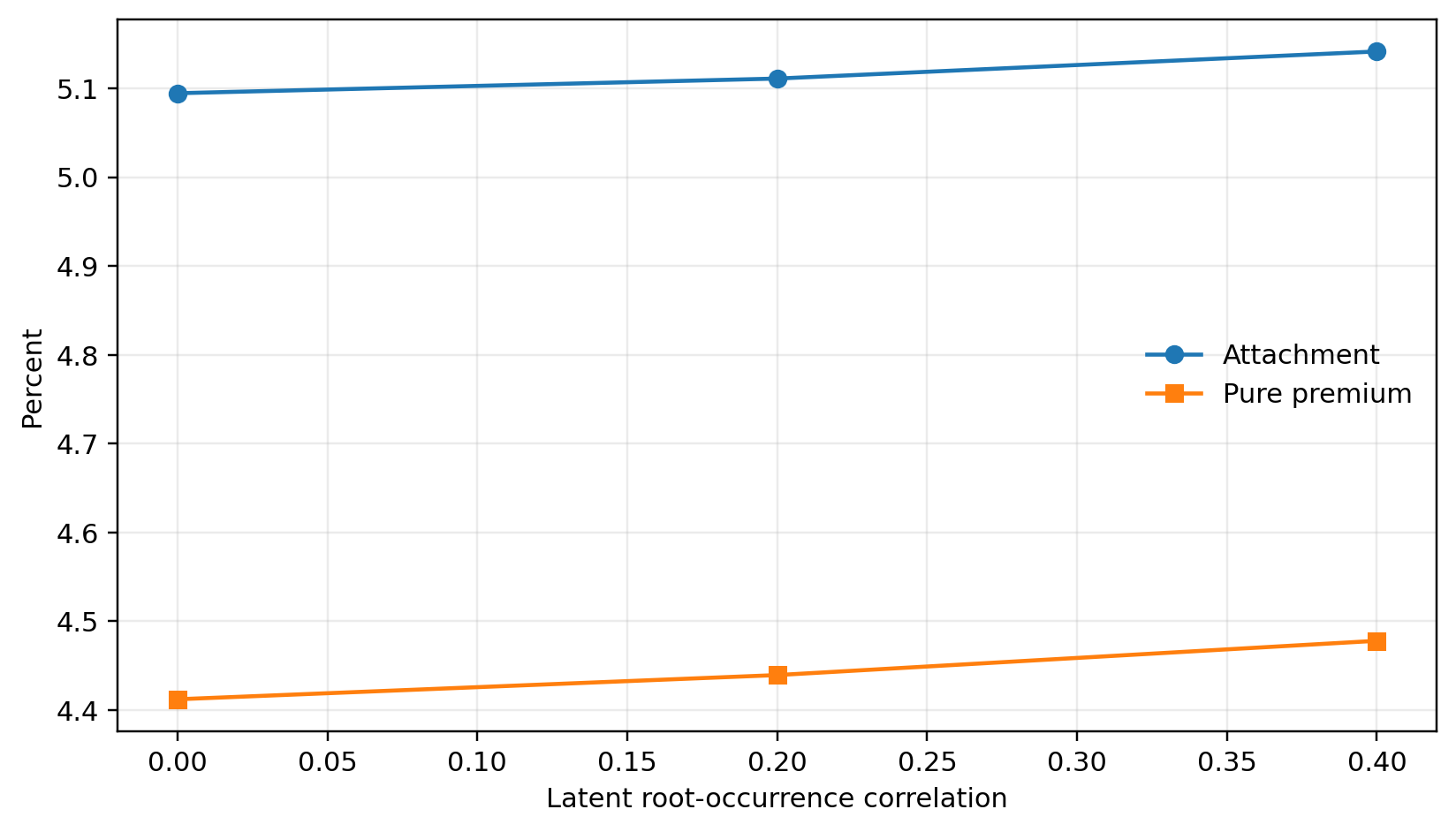}
\caption{Contract-level response to residual dependence among exogenous root
onsets.}
\label{fig:latent-dependence}
\end{figure}

Figure~\ref{fig:latent-dependence} visualizes the contract-level response to
increasing residual root dependence. The premium and attachment curves rise
only gradually, whereas the tail metrics exhibit a steeper increase. This
pattern is consistent with correlated exogenous conditions creating more
opportunities for joint wildfire--precipitation states without replacing the
directional dependence already represented by the cascade.

The pathwise upper-corner check uses $30{,}000$ paired event-window scenarios.
For each pair, the exogenous shocks and onset uniforms are held fixed, while the
edge probabilities, wind multiplier, and terrain multiplier vary within the
documented rectangular stress set. All paired losses satisfy the analytical
ordering. The upper corner produces a strict increase in $1{,}683$ scenarios,
and the smallest positive slack is $3.59\times10^{-5}$ billion USD.

\begin{figure}[H]
\centering
\includegraphics[width=0.76\textwidth]{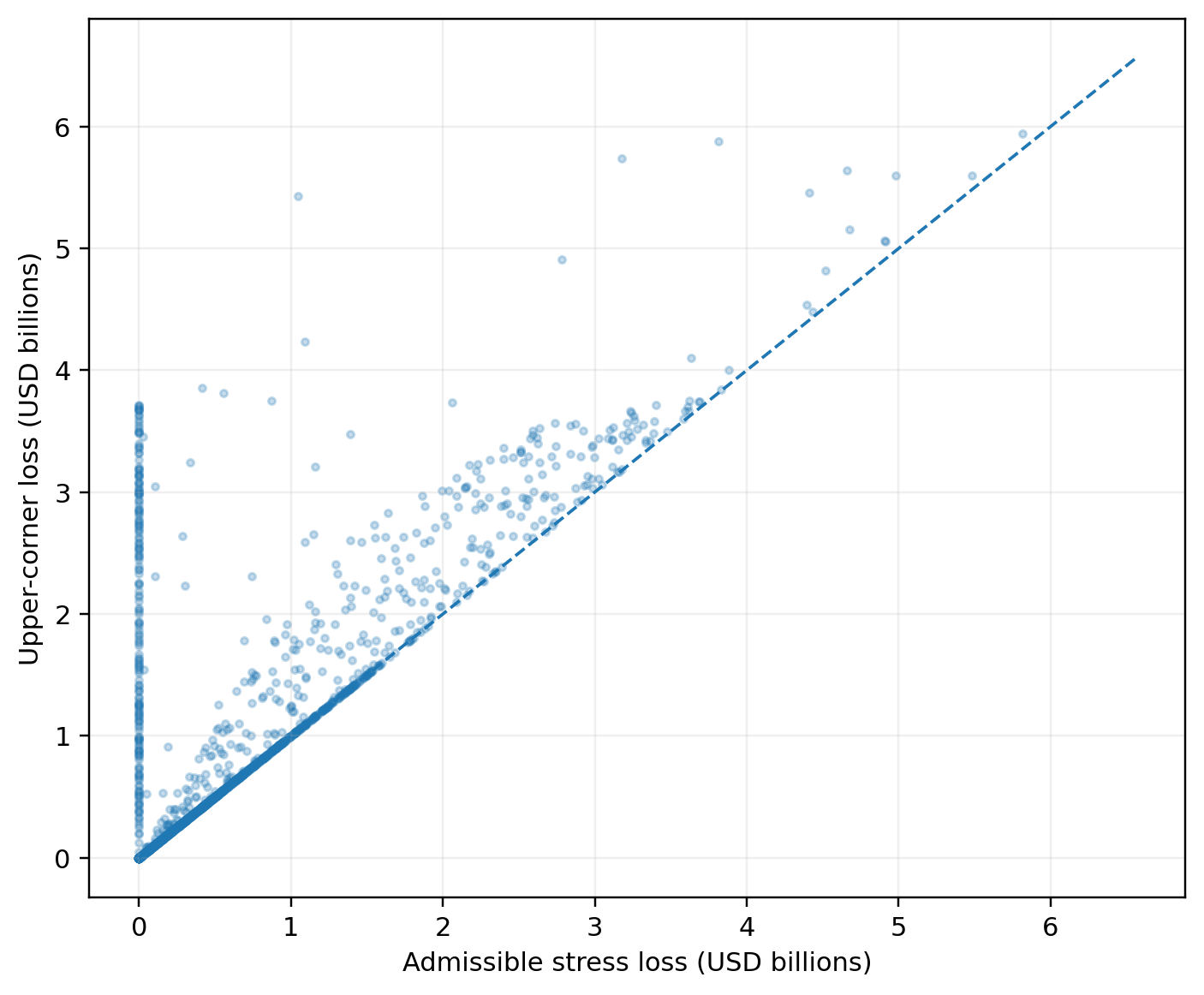}
\caption{Pathwise upper-corner verification under common aleatory inputs.}
\label{fig:pathwise-bound-verification}
\end{figure}

Figure~\ref{fig:pathwise-bound-verification} plots each admissible-stress loss
against the loss obtained from the corresponding upper-corner scenario under
the same aleatory inputs. Every point lies on or above the diagonal reference
line, so the upper-corner loss is never smaller than the paired admissible loss.
Points on the diagonal represent scenarios for which the stress increase does
not change the realized loss, while points above it show strict pathwise
amplification. The figure provides a direct numerical verification of the
ordering used in Theorem~\ref{thm:certified-bound}.

\subsection{Structural Ablation and One-at-a-Time Parameter Sensitivity}
\label{robustness}

The ablation study removes one structural component at a time while retaining
common random-number seeds.

\begin{table}[H]
\centering
\small
\caption{Ablation results at $\Delta T=1.2^\circ\mathrm{C}$. Tail quantities
are annual aggregate ground-up losses in USD billions.}
\label{tab:ablation}
\resizebox{\textwidth}{!}{%
\begin{tabular}{@{}lcccc@{}}
\toprule
\textbf{Specification} & \textbf{Attachment (\%)} & \textbf{Premium (\%)} & $\boldsymbol{\operatorname{VaR}_{99.5\%}}$ & $\boldsymbol{\operatorname{TVaR}_{99.5\%}}$ \\
\midrule
Full CCRN & 5.16 & 4.48 & 3.767 & 4.828 \\
No directional propagation & 0.87 & 0.51 & 1.738 & 2.181 \\
No debris-flow interaction & 5.12 & 4.43 & 3.699 & 4.435 \\
No demand surge & 5.01 & 4.33 & 3.720 & 4.681 \\
Single-event approximation & 3.35 & 2.88 & 3.503 & 4.002 \\
\bottomrule
\end{tabular}}
\end{table}

Table~\ref{tab:ablation} quantifies the contribution of each structural
component by comparing the full CCRN with one-component removals. Eliminating
directional propagation produces the largest reduction in attachment,
premium, and tail risk under the reference parameterization. Replacing annual
aggregation with a single-event approximation also reduces the contract
metrics materially, while removing demand surge or the debris-flow interaction
has a smaller effect near attachment. The interaction nevertheless has a more
visible effect on TVaR than on the selected layer price.

\begin{figure}[H]
\centering
\includegraphics[width=0.88\textwidth]{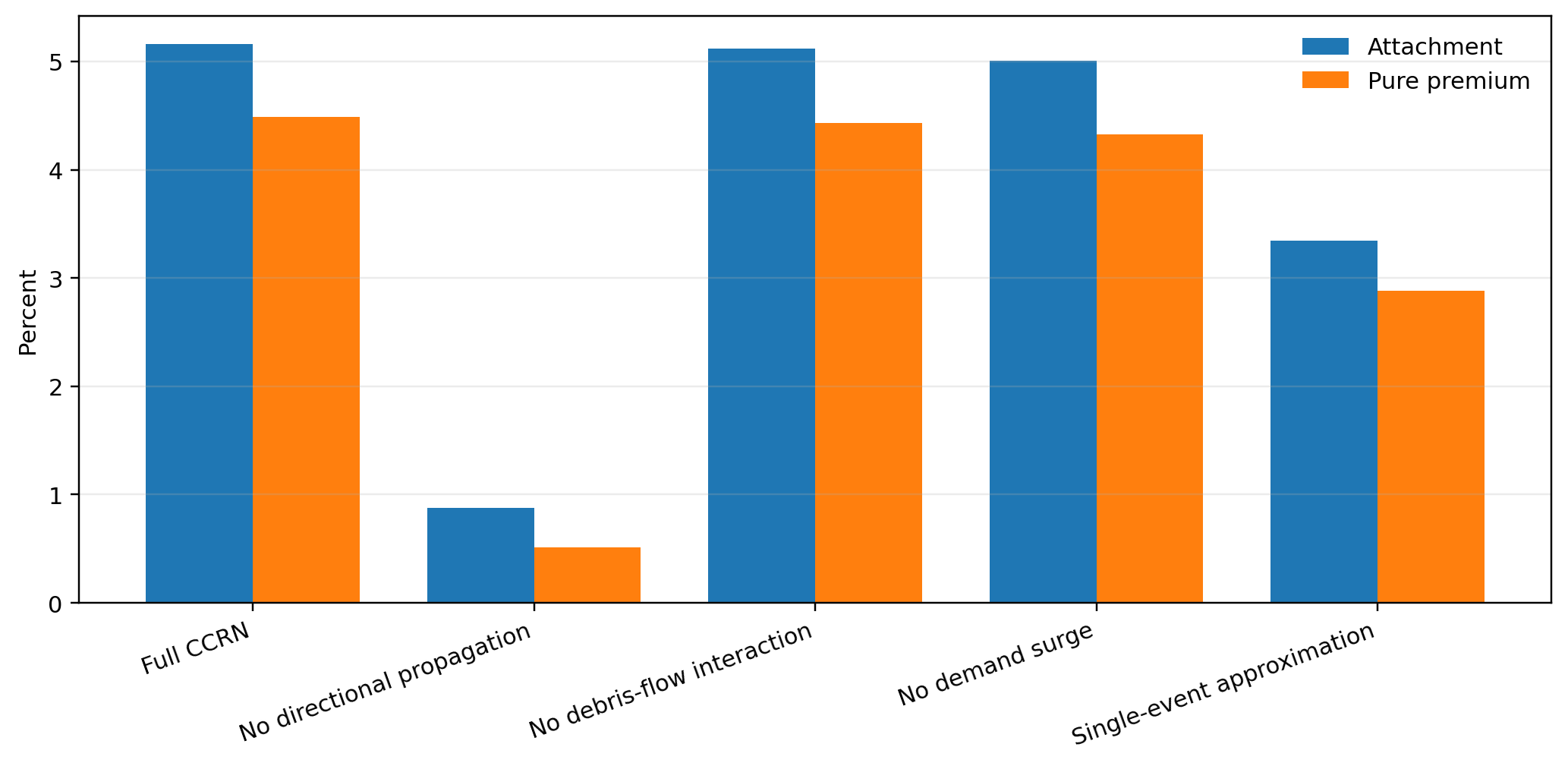}
\caption{Attachment probability and pure premium under structural ablations.}
\label{fig:ablation}
\end{figure}

Figure~\ref{fig:ablation} presents the attachment probability and pure premium
for the same ablation specifications in a directly comparable graphical form.
The sharp decline produced by removing directional propagation is visible in
both contract metrics, followed by the reduction under the single-event
approximation. The smaller gaps for demand surge and the debris-flow interaction
show that their influence is more localized under the selected layer and
reference parameter vector.

A structured low--base--high one-at-a-time parameter sweep then varies the baseline edge
probabilities, climate coefficients, background occurrence probabilities,
interaction strength, severity-response steepness, annual frequency, and
demand-surge intensity. Table~\ref{tab:oat-sensitivity} reports the percentage
change in pure premium relative to the base specification.

\begin{table}[H]
\centering
\small
\caption{One-at-a-time sensitivity of the pure premium relative to the base
specification.}
\label{tab:oat-sensitivity}
\begin{tabular}{@{}lrr@{}}
\toprule
\textbf{Parameter} & \textbf{Low setting (\%)} & \textbf{High setting (\%)} \\
\midrule
Fuel-to-wildfire edge probability & -25.0 & 26.6 \\
Debris interaction probability & -0.4 & 0.4 \\
Fuel-edge climate coefficient & -11.9 & 14.1 \\
Debris-edge climate coefficient & -0.1 & 0.2 \\
Wildfire background probability & -3.5 & 6.8 \\
Debris background probability & -0.0 & 0.1 \\
Interaction hazard multiplier & -0.2 & 0.3 \\
Severity-response steepness & -13.5 & 7.3 \\
Annual event frequency & -21.0 & 20.7 \\
Demand-surge intensity & -1.8 & 1.7 \\
\bottomrule
\end{tabular}
\end{table}

Table~\ref{tab:oat-sensitivity} reports the percentage change in pure premium
when each input is moved separately from its base value to a low or high
setting. The fuel-to-wildfire edge probability and annual event frequency
produce the largest two-sided changes, followed by the fuel-edge climate
coefficient and severity-response steepness. The debris-flow background and
interaction parameters have comparatively small effects on the reference
layer premium, indicating that their main influence occurs farther in the loss
tail.

\begin{figure}[H]
\centering
\includegraphics[width=0.88\textwidth]{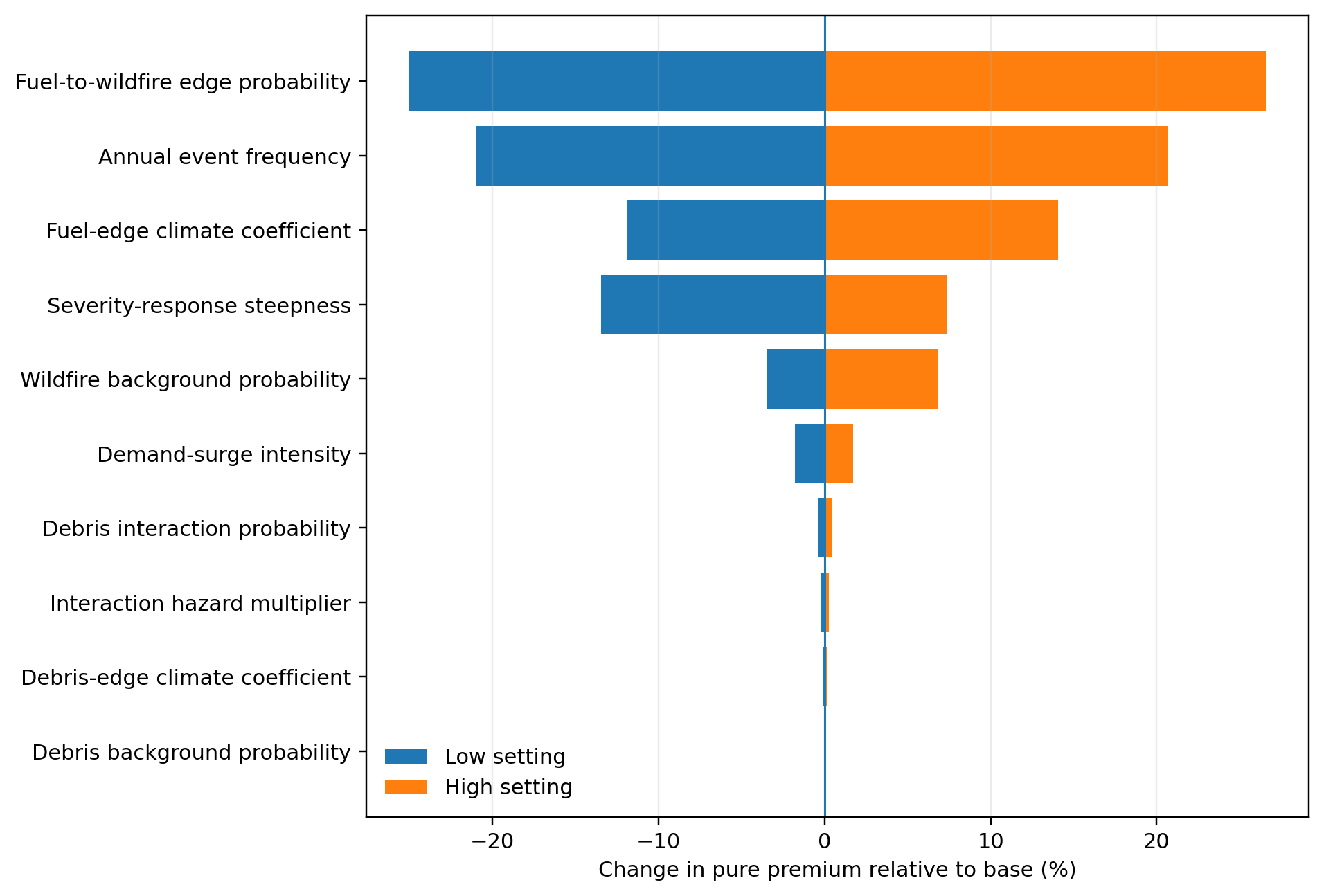}
\caption{Tornado chart for the structured low--base--high one-at-a-time pure-premium sensitivity sweep.}
\label{fig:oat-sensitivity}
\end{figure}

Figure~\ref{fig:oat-sensitivity} ranks the low--high premium changes from
Table~\ref{tab:oat-sensitivity} in tornado form. The longest bars correspond to
the fuel-to-wildfire edge probability and annual event frequency, with the
fuel-edge climate coefficient and severity-response steepness forming the next
tier. The compact bars for the debris-flow interaction and background terms
show that these inputs are less influential for the chosen layer, even though
they can remain relevant to far-tail loss.

\subsection{Parameter Recovery, Predictive Uncertainty, and a Static DAG Benchmark}
\label{subsec:uncertainty-bn}

A synthetic recovery experiment uses $400{,}000$ event windows with
$\Delta T\sim\operatorname{Uniform}(0,2)$ and re-estimates the two edge
models using complementary-log-log regressions. The fuel and debris stress
indices enter as offsets, so the fitted intercepts recover the reference
cumulative hazards and the slope coefficients recover the climate sensitivities.
For each edge, the fitted intercept--slope pair is sampled jointly from its
bivariate asymptotic normal approximation, thereby preserving the estimated
within-edge covariance. The two edge blocks are sampled independently of one
another, and no cross-block covariance is imposed. Background occurrence
probabilities use Jeffreys-beta posterior draws based on the same $400{,}000$
event windows. Annual frequency uses a Jeffreys-gamma posterior constructed from
a separate synthetic sample of $30{,}000$ contract years. Loss-capacity and
demand-surge multipliers are sampled independently from mean-one lognormal laws
with log-scale standard deviations $0.10$ and $0.15$, respectively.

The uncertainty-propagation experiment retains $B_{\theta}=100$ complete
parameter vectors. Each vector is propagated through $25{,}000$ newly simulated
contract years. The Monte Carlo-only benchmark consists of $100$ independent
replications of $25{,}000$ contract years at the fixed reference parameter
vector. The aleatory random-number streams used for the parameter-propagation
and fixed-parameter rows are independent. Thus, the reported comparison reflects
the distribution of replication-level contract statistics rather than a paired
common-random-number contrast.

\begin{table}[H]
\centering
\small
\caption{Reference values and central 90\% intervals used in the
predictive uncertainty propagation.}
\label{tab:parameter-recovery}
\begin{tabular}{@{}lrrrr@{}}
\toprule
\textbf{Parameter} & \textbf{Reference} & \textbf{Median} & \textbf{Lower} & \textbf{Upper} \\
\midrule
$p_{12}^{0}$ & 0.3500 & 0.3508 & 0.3440 & 0.3594 \\
$\beta_{12}$ & 0.3500 & 0.3554 & 0.3269 & 0.3731 \\
$p_{(2,3)4}^{0}$ & 0.5500 & 0.5401 & 0.5003 & 0.5822 \\
$\beta_{(2,3)4}$ & 0.1800 & 0.1879 & 0.0869 & 0.2854 \\
$q_2$ & 0.0150 & 0.0150 & 0.0147 & 0.0154 \\
$q_4$ & 0.0010 & 0.0010 & 0.0009 & 0.0011 \\
Annual frequency & 1.5000 & 1.4982 & 1.4880 & 1.5119 \\
Loss-capacity multiplier & 1.0000 & 0.9892 & 0.8505 & 1.1363 \\
Demand-surge multiplier & 1.0000 & 0.9893 & 0.7808 & 1.2302 \\
\bottomrule
\end{tabular}
\end{table}

Table~\ref{tab:parameter-recovery} compares the reference inputs with the
medians and central 90\% intervals obtained in the synthetic recovery experiment.
For each parameter, the interval endpoints are the empirical 5th and 95th
percentiles of the $100$ retained parameter vectors. The baseline edge
probabilities, background occurrence rates, and annual frequency remain close to
their generating values. Wider intervals appear for the debris-flow climate
coefficient and the financial-map multipliers, reflecting the weaker information
available for a rare interaction and the additional uncertainty assigned to loss
capacity and demand surge. These are marginal summaries of the blockwise
parameter-draw construction described above; they should not be interpreted as
independent rectangular uncertainty bounds.

\begin{table}[H]
\centering
\small
\caption{Central 90\% ranges for Monte Carlo process variation and
combined parameter--modeled-process uncertainty.}
\label{tab:uncertainty-propagation}
\resizebox{\textwidth}{!}{%
\begin{tabular}{@{}llccc@{}}
\toprule
\textbf{Metric} & \textbf{Uncertainty source} & \textbf{Lower} & \textbf{Median} & \textbf{Upper} \\
\midrule
Attachment (\%) & Monte Carlo only & 4.936 & 5.110 & 5.369 \\
Attachment (\%) & Parameter + modeled process & 4.272 & 5.090 & 5.920 \\
Premium (\%) & Monte Carlo only & 4.242 & 4.403 & 4.638 \\
Premium (\%) & Parameter + modeled process & 3.653 & 4.385 & 5.165 \\
$\operatorname{VaR}_{99.5\%}$ & Monte Carlo only & 3.703 & 3.764 & 3.906 \\
$\operatorname{VaR}_{99.5\%}$ & Parameter + modeled process & 3.180 & 3.733 & 4.281 \\
$\operatorname{TVaR}_{99.5\%}$ & Monte Carlo only & 4.586 & 4.820 & 5.032 \\
$\operatorname{TVaR}_{99.5\%}$ & Parameter + modeled process & 4.017 & 4.726 & 5.589 \\
\bottomrule
\end{tabular}}
\end{table}

Table~\ref{tab:uncertainty-propagation} separates simulation-process variation
from the combined effect of the specified parameter uncertainty and modeled
aleatory processes. For each metric and each row, the displayed central 90\%
range is the empirical 5th--95th percentile interval across the corresponding
$100$ replication-level statistic values; annual losses are not pooled across
replications before the interval is formed. The central medians remain close
across the two sources, while the combined intervals are substantially wider.

Conditional beta innovations governed by $\phi_j$ in
Equation~\eqref{eq:conditional-severity-model} are used in estimation and
diagnostic calibration but are not propagated as an additional aleatory
severity term in this controlled forward experiment. The cascade continues to
propagate the fitted conditional mean $\Psi_j$ deterministically within each
realization, consistent with Section~\ref{subsec:estimation-aggregation}. The
combined ranges therefore include uncertainty in the occurrence/climate-response
parameters, background rates, annual frequency, and financial-map multipliers,
together with Monte Carlo process variation, but exclude residual conditional-
severity variation around the fitted mean response. They are consequently a
scoped predictive-uncertainty assessment rather than an interval encompassing
all possible severity uncertainty.

\begin{figure}[H]
\centering
\includegraphics[width=0.90\textwidth]{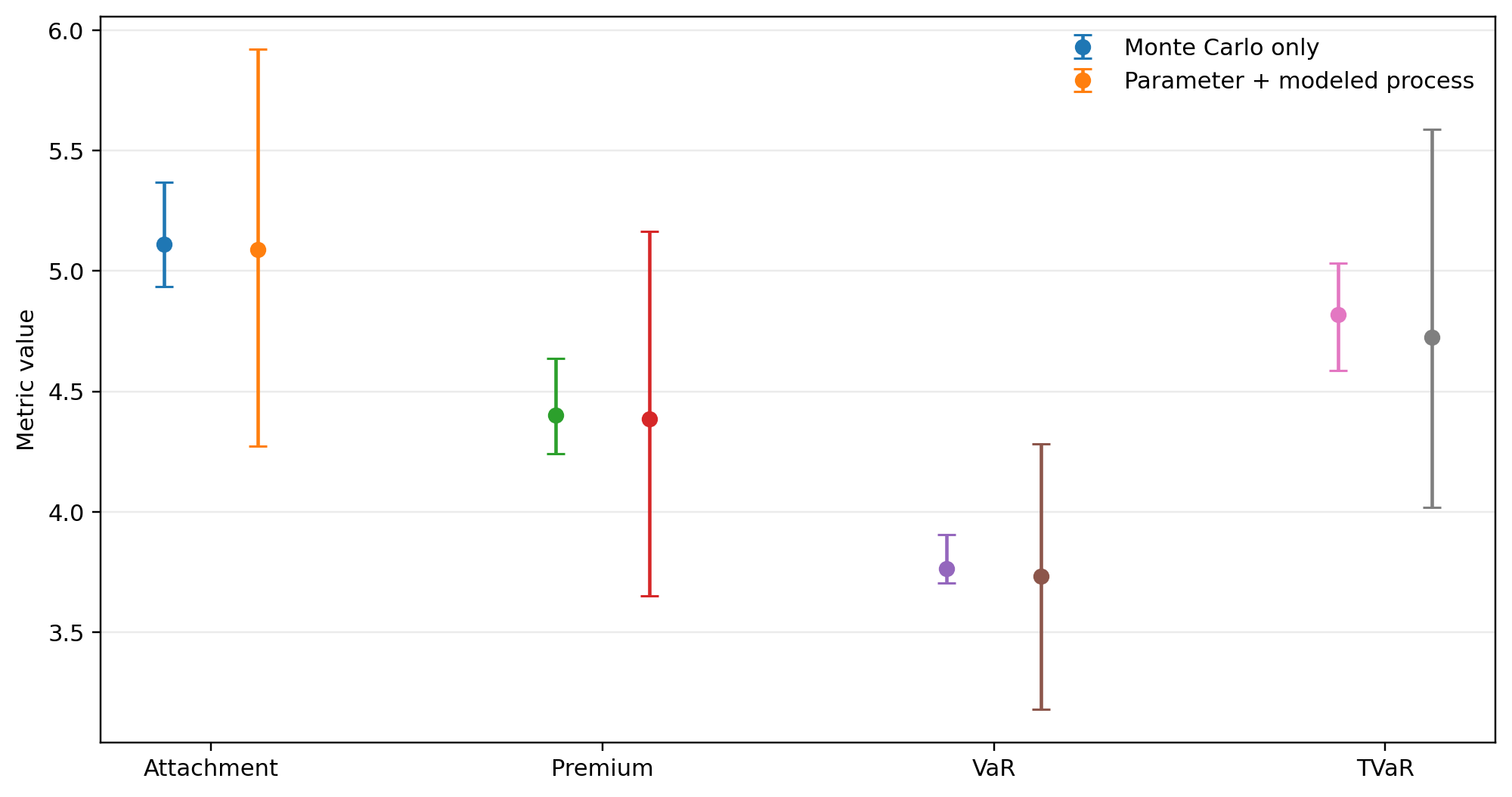}
\caption{Monte Carlo process variation compared with combined parameter and
modeled-process uncertainty; residual conditional-severity innovations are excluded.}
\label{fig:uncertainty-propagation}
\end{figure}

Figure~\ref{fig:uncertainty-propagation} displays the interval expansion from
Monte Carlo variation to combined parameter--modeled-process uncertainty for each
reported metric. The broader combined ranges are visible for both the
contract-level quantities and the far-tail measures, while their centers remain
similar. The figure demonstrates why a small simulation standard error cannot
be used as a substitute for predictive uncertainty in model inputs and the
financial mapping.

A structurally closer benchmark is fitted on an independent CCRN training sample
of $500{,}000$ event windows and evaluated on $250{,}000$ independently simulated
contract years per model. The graph, annual frequency law, bounded financial map,
and contract are held fixed. No continuous parent variable is discretized.
Instead, the static benchmark deliberately integrates out the continuous parent
severity and modifier states when constructing Bernoulli conditional-probability
tables from the discrete parent-occurrence states: wildfire occurrence is
conditioned on the fuel-aridity occurrence indicator, and debris-flow occurrence
is conditioned on the joint wildfire and precipitation occurrence indicators.
All Bernoulli table entries use Jeffreys additive smoothing with pseudocount
$0.5$ for each child state.

Conditional severity is represented by empirical positive-severity pools frozen
from the training sample. Root-node severities are drawn from their positive
marginal pools; wildfire severity is resampled within the corresponding
fuel-aridity parent state; and debris-flow severity is resampled within the
corresponding wildfire--precipitation parent state. To prevent unstable sampling
from very sparse cells, a wildfire pool containing fewer than $50$ positive
observations or a debris-flow pool containing fewer than $30$ is replaced by the
corresponding positive child-severity marginal pool from the training sample.
Every conditional probability, smoothing constant, pool, and fallback decision is
fixed before the test simulation and is not re-estimated on the test sample.

\begin{table}[H]
\centering
\small
\caption{Out-of-sample comparison with a static Bayesian network on the same
graph and financial mapping. Tail quantities are in USD billions. Parentheses
contain model-wise bootstrap standard errors from 60 i.i.d. contract-year
resamples, conditional on the frozen fitted benchmark.}
\label{tab:static-bn-comparison}
\begin{tabular}{@{}lcccc@{}}
\toprule
\textbf{Model} & \textbf{Attachment (\%)} & \textbf{Premium (\%)} & $\boldsymbol{\operatorname{VaR}_{99.5\%}}$ & $\boldsymbol{\operatorname{TVaR}_{99.5\%}}$ \\
\midrule
CCRN test sample & 5.18 (0.04) & 4.49 (0.04) & 3.761 (0.012) & 4.836 (0.033) \\
Static Bayesian network & 5.18 (0.05) & 4.47 (0.05) & 3.743 (0.009) & 4.688 (0.036) \\
\bottomrule
\end{tabular}
\end{table}

Table~\ref{tab:static-bn-comparison} compares the CCRN with the frozen static
Bayesian-network benchmark on independent held-out simulations. Attachment and
pure premium are nearly identical relative to their model-wise Monte Carlo
precision, while the static benchmark has lower VaR and, more clearly, lower
TVaR. The TVaR difference is $0.149$ billion USD. Combining the two independent
model-wise bootstrap standard errors gives a Monte Carlo standard error of
approximately $0.048$ billion for that difference, so the observed TVaR gap is
about $3.1$ Monte Carlo standard errors. This comparison is conditional on the
frozen training-sample fit and therefore does not include re-estimation
uncertainty for the static network.

The comparison provides a closer assessment of retaining continuous
state-dependent propagation relative to a benchmark that uses the same graph,
annual-frequency and financial mappings but represents conditional relationships
through static discrete-parent probability tables and empirical severity pools.
It does not isolate propagation alone, because marginalization of the continuous
parent state and the static conditional representation change simultaneously.

\subsection{Independent Monte Carlo Precision}
\label{subsec:mc-precision}

Independent replications are used to assess Monte Carlo precision at increasing
contract-year sample sizes. Table~\ref{tab:independent-mc} reports the
across-replication mean and standard deviation from ten replications at each
sample size.

\begin{table}[H]
\centering
\small
\caption{Independent-replication Monte Carlo precision. Each entry is mean
(standard deviation) across ten replications.}
\label{tab:independent-mc}
\resizebox{\textwidth}{!}{%
\begin{tabular}{@{}rcccc@{}}
\toprule
\textbf{Contract years} & \textbf{Attachment (\%)} & \textbf{Premium (\%)} & $\boldsymbol{\operatorname{VaR}_{99.5\%}}$ & $\boldsymbol{\operatorname{TVaR}_{99.5\%}}$ \\
\midrule
25,000 & 5.122 (0.091) & 4.432 (0.091) & 3.790 (0.050) & 4.813 (0.108) \\
50,000 & 5.119 (0.098) & 4.418 (0.089) & 3.785 (0.045) & 4.794 (0.059) \\
100,000 & 5.138 (0.078) & 4.424 (0.076) & 3.756 (0.015) & 4.810 (0.056) \\
200,000 & 5.133 (0.066) & 4.431 (0.051) & 3.763 (0.014) & 4.813 (0.051) \\
\bottomrule
\end{tabular}}
\end{table}

Table~\ref{tab:independent-mc} reports the across-replication mean and standard
deviation of each statistic at increasing simulation budgets. The mean values
remain stable across sample sizes, while the standard deviations generally
narrow as the number of contract years increases. Attachment and premium are
estimated more precisely than the extreme-tail quantities, particularly TVaR,
which requires more simulated years for comparable replication stability.

\begin{figure}[H]
\centering
\includegraphics[width=0.78\textwidth]{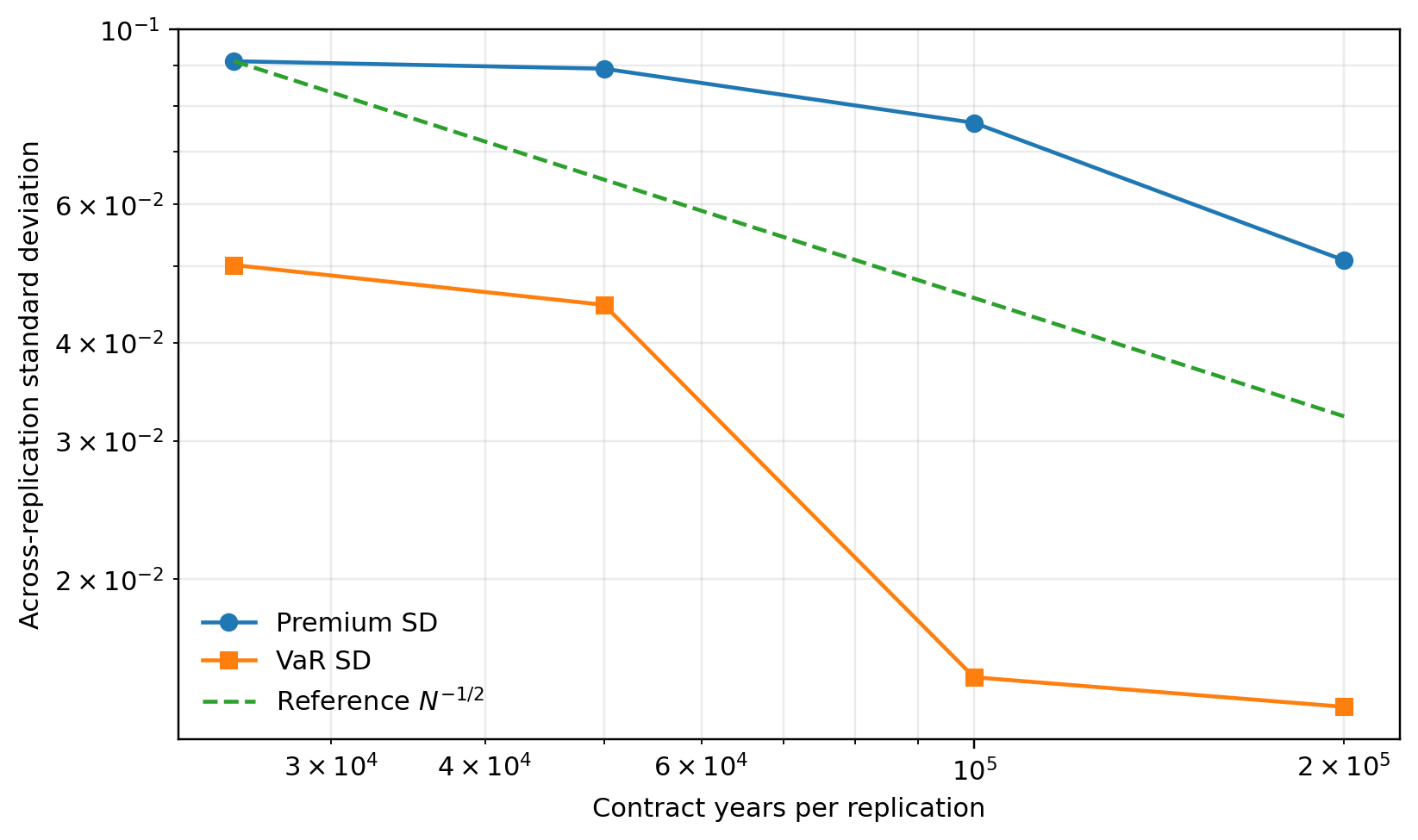}
\caption{Across-replication standard deviations as the contract-year sample
increases. The dashed line gives an $N^{-1/2}$ reference slope.}
\label{fig:mc-convergence}
\end{figure}

Figure~\ref{fig:mc-convergence} plots the across-replication standard
deviations from Table~\ref{tab:independent-mc} against the contract-year sample
size. The downward trajectories are broadly consistent with the displayed
$N^{-1/2}$ reference rate, although the tail estimators remain more variable
than attachment and premium. Together with the model-wise and bootstrap
standard errors in Tables~\ref{tab:pricing-comparison}
and~\ref{tab:tail-comparison}, the figure documents the numerical precision of
the reported comparisons using independent replications.

\section{Computational Complexity and Numerical Verification}
\label{analy}

Let $n=|\mathcal{V}|$, $m=|\mathcal{E}|$, $N_{\mathrm{yr}}$ be the number of simulated contract years, and $N_{\mathrm{evt}}=\sum_{y=1}^{N_{\mathrm{yr}}}N_y$ the realized number of event windows. A topological ordering is computed once in $O(n+m)$ time. Each event-window cascade then evaluates every node and edge once, so the propagation and bounded loss mapping require
\begin{equation}
O\!\left(N_{\mathrm{evt}}(n+m)\right)
\label{eq:cascade-complexity}
\end{equation}
operations. Annual aggregation and application of the layer require an additional $O(N_{\mathrm{evt}}+N_{\mathrm{yr}})$ operations. For a sparse graph with $m=O(n)$ and finite mean event frequency, expected simulation cost is $O(N_{\mathrm{yr}}n)$.

A fully vectorized implementation stores $O(N_{\mathrm{evt}}n+m)$ working values. Batching event windows reduces working memory to $O(Bn+m)$ for batch size $B$ without changing the model. The upper-corner stress evaluation adds one coupled cascade evaluation per aleatory scenario and does not require linear programming for a rectangular monotone set.

These statements concern forward simulation, not calibration. Hazard estimation, marginal fitting, topology assessment, and dynamic-copula or point-process estimation must be timed separately. Controlled comparisons should use common hardware, numerical precision, random-number budgets, forecast origins, and stopping rules.

The numerical validation additionally checks state bounds, component loss caps,
climate monotonicity under common random numbers, annual aggregation, the
upper-corner inequality, and equality of the fixed- and climate-conditioned
frequency specifications at $\Delta T=0$.

\section{Conclusion}
\label{con}

This paper introduced a climate-conditioned cascade network (CCRN) that decouples calendar-scale climate covariates, event-window propagation, and capacity-bounded financial mapping for multi-peril reinsurance. The framework guarantees a unique finite-step closure on a DAG and provides a strict pathwise upper-corner loss bound under monotone stress, a result verified exactly by paired numerical experiments.

Key numerical findings reveal that the choice of dependence model (such as Gaussian, $t$-Student, or CCRN) primarily impacts far-tail and high-layer losses, while central prices remain stable. Structured sensitivity analyses identify directional fuel-to-wildfire propagation and annual event frequency as the dominant drivers of high-layer losses. Furthermore, uncertainty in occurrence and climate-response parameters generates substantially wider predictive ranges than inherent Monte Carlo process variation, highlighting the importance of robust parameter estimation.

Compared to static Bayesian networks, the CCRN retains critical state-dependent tail structures that standard models tend to smooth out. This framework offers a rigorous foundation for empirical estimation using timestamped hazard and claims data, facilitating direct application to fitted multi-peril portfolios.

\section*{Acknowledgements}
The authors used AI solely for language refinement and LaTeX formatting. The authors reviewed and verified the final manuscript and take full responsibility for its content.

\appendix
\section{Detailed Data Construction and Estimation Workflow}
\label{app:data-workflow}

The material in this appendix operationalizes the empirical protocol summarized
in Section~\ref{data} and provides the detailed construction underlying the
synthetic recovery and uncertainty-propagation experiment.

\subsection{Mechanistic Specification and Data Construction}
\label{subsec:mechanistic-data}

An empirical implementation begins by fixing the physical structure and the rules used to construct observable event states. The event-scale topology is specified before statistical estimation
using documented atmospheric, ecological, hydrological, and
geomorphological mechanisms. Physical evidence determines which
directed transitions and parent interactions are admissible and
provides defensible ranges for their lag windows. It does not determine
the numerical magnitude of the corresponding triggering hazards.

For the wildfire and post-fire debris-flow application, a physically
more defensible illustrative topology contains at least four state
components:

\begin{enumerate}
    \item $v_1$: antecedent fuel aridity constructed from drought,
    temperature, and vegetation-moisture indicators;

    \item $v_2$: wildfire occurrence and burn severity;

    \item $v_3$: extreme precipitation during a declared storm window;

    \item $v_4$: post-fire debris-flow occurrence and severity.
\end{enumerate}

The admissible directed edges include
$(v_1,v_2)$, $(v_2,v_4)$, and $(v_3,v_4)$. The interaction set for
the debris-flow node includes $(v_2,v_3)\in\mathcal{I}_4$ because
burned-slope susceptibility and intense precipitation act jointly.
For a specifically post-fire debris-flow target, the at-risk set is
restricted to eligible burned areas during a prespecified recovery
window. Therefore, precipitation outside an eligible burned area
cannot be interpreted as a post-fire debris-flow trigger.

Fuel aridity modifies wildfire susceptibility but does not by itself
represent ignition. Ignition occurrence, wind, fuel continuity, and
suppression conditions enter the wildfire onset model as observed
covariates or, when scientifically justified, as additional nodes.
Similarly, slope, soil properties, burn severity, rainfall intensity,
and antecedent precipitation enter the physical stress function for
debris flow.

Some graph nodes may represent physical drivers without carrying a
direct insured loss. Such nodes remain part of the cascade mechanism
but have zero direct loss capacity. Only loss-bearing nodes are assigned
positive insured-loss parameters.

All admissible parent effects and prespecified interactions are
estimated jointly for each target node. Structural zeros are imposed
only on pathways excluded before estimation. An admissible edge is not
deleted solely because an individual significance test fails. Weakly
identified coefficients are retained with their uncertainty intervals
or subjected to a shrinkage rule specified before test-sample
evaluation.

The resulting structure is described as a mechanistically constrained
directed model. Mechanistic admissibility alone does not establish an
intervention effect. A causal interpretation additionally requires
consistent event definitions, correct temporal ordering, positivity,
adequate adjustment for common causes, and absence of relevant
unmeasured confounding. When these conditions cannot be defended, the
estimated quantities are interpreted as physics-constrained predictive
hazards.

Once the topology, event definitions, and lag windows have been fixed, the heterogeneous physical and insurance data must be placed on compatible spatial and temporal supports without erasing the timing needed to identify each transition. Meteorological variables may be constructed from ERA5-Land or other
documented reanalysis products. Burned area and burn severity may be
obtained from products such as MCD64A1, VNP64A1, or the Monitoring
Trends in Burn Severity archive. Post-fire debris-flow events and
susceptibility variables may be constructed from documented USGS event
inventories and hazard-assessment products. CMIP6 output is used only
to construct future scenario covariates after the historical model has
been estimated.

Every data source must be accompanied by its scientific citation,
product version, download date, quality-assurance flags, coordinate
reference system, native resolution, missing-data treatment, and
spatial and temporal aggregation operator. Products with overlapping
coverage are used for cross-product validation and are not treated as
independent observations of the same event.

The analysis does not force every physical process onto a common weekly
clock. Instead, event construction uses a target-specific observation
scale and an edge-specific lag window. For example, fuel aridity may
be constructed from rolling meteorological and vegetation-moisture
summaries, wildfire onset may be identified on a daily event clock,
and debris-flow triggering precipitation may require storm-level or
subdaily peak-intensity summaries. Weekly aggregation may subsequently
be used for actuarial forecast evaluation, but it must not replace the
higher-frequency variables required to define physical onset.

All spatial products are mapped to a documented common analysis support.
The selected support should be no finer than the least reliable input
needed for the relevant edge. Regridding operators must preserve the
meaning of each variable. Continuous meteorological fields may be
area-averaged or interpolated, whereas burned-area and event indicators
require aggregation rules appropriate for their discrete spatial
support.

The effective estimation window for each edge is determined by the
intersection of the products required to estimate that edge. Longer
meteorological availability cannot be used to claim a longer joint
estimation period when the required fire or debris-flow observations
begin later. Earlier and later satellite products may be linked only
through an explicit cross-calibration procedure estimated within the
training data.

Let $S_i^k(t)\in[0,1]$ denote the observed normalized physical severity
for component $i$ in spatial unit $k$ at its declared observation time.
All normalization rules, physical bounds, transformations, event
thresholds, and missing-value procedures are estimated or specified
using the training period and then frozen before validation and testing.

Define the observed state as
$Z_i^k(t)=\mathbf{1}\{S_i^k(t)\geq u_{i,k}\}$, where $u_{i,k}$ is an
observational event-definition threshold. For target node $j$, the
onset indicator is
$Y_{j,k,t}=\mathbf{1}\{Z_j^k(t)=1,Z_j^k(t-1)=0\}$ and the risk-set
indicator is $R_{j,k,t}=1-Z_j^k(t-1)$. For a post-fire target, the risk
set additionally includes the declared burned-area eligibility and
recovery-window conditions.

For edge $(i,j)$, define the lagged parent contribution as
$Q_{ij,k,t}=\max_{\ell\in\mathcal{L}_{ij}}
Z_i^k(t-\ell)q_{ij}(S_i^k(t-\ell))$. This quantity retains both parent
occurrence and parent severity and corresponds directly to the
severity-modulated hazard contribution in
Equation~\eqref{eq:onset-probability}. For a prespecified parent pair
$(i,\ell)\in\mathcal{I}_j$, the interaction contribution
$Q^{\mathrm{int}}_{i\ell,j,k,t}$ is constructed from the corresponding
joint parent states and physical timing rule.

The observational threshold $u_{j,k}$ defines whether an event is
recorded in the data. It is distinct from the physical stress threshold
$\theta_j$ in Equation~\eqref{eq:severity-response}. The two quantities
must not be estimated or interpreted interchangeably.

\end{document}